\DeclareMathAlphabet\mathbfcal{OMS}{cmsy}{b}{n}
\newcommand{\ket}[1]{\ensuremath{|#1\rangle}}
\newcommand{\bra}[1]{\ensuremath{\langle #1|}}
\newcommand{\braket}[2]{\langle #1|#2\rangle}
\newcommand{\proj}[1]{\ket{#1}\!\bra{#1}}
\newcommand{\ketbra}[2]{\ket{#1}\! \bra{#2}}
\newcommand{\vect}[1]{\mathrm{vec}}
\newcommand{\tr}{{\rm tr}}
\newcommand{\innerprod}[2]{\left\langle #1 , #2 \right\rangle}
\newcommand{\be}{\begin{equation}}
\newcommand{\ee}{\end{equation}}
\newcommand{\ba}{\begin{eqnarray}}
\newcommand{\ea}{\end{eqnarray}}
\newcommand{\calS}{\mathcal{S}}
\newcommand{\W}{\mathcal{W}}
\newcommand{\ra}{{\rm A}}
\newcommand{\rb}{{\rm B}}
\newcommand{\rc}{{\rm C}}
\newcommand{\rd}{{\rm D}}
\newcommand{\rabc}{{\rm ABC}}
\newcommand{\rab}{{\rm AB}}
\newcommand{\rac}{{\rm AC}}
\newcommand{\rbc}{{\rm BC}}
\newcommand{\rms}{{\rm S}}
\newcommand{\rmt}{{\rm T}}
\newcommand{\Tp}{^{\mbox{\tiny \rm T}}}
\newcommand{\norm}[1]{\left\|#1\right\|}
\newcommand{\id}{\mathbb{I}}
\newcommand{\rhos}{\rho_\rms}
\newcommand{\rhot}{\rho_\rmt}
\newtheorem{theorem}{Theorem}
\newtheorem{lemma}[theorem]{Lemma}
\newtheorem{question}{Question}
\definecolor{nred}{rgb}{0.9,0.1,0.1}
\definecolor{nblack}{rgb}{0,0,0}
\definecolor{nblue}{rgb}{0.2,0.2,0.8}
\definecolor{ngreen}{rgb}{0.2,0.5,0.2}
\definecolor{ublue}{rgb}{0,0,0.5}
\definecolor{pur}{rgb}{0.3,0.1,0.6}
\definecolor{nngrn}{rgb}{0,0.5,0.5}
\definecolor{CitingColor}{rgb}{0,0.3,1}
\newcommand{\blu}{\color{nblue}}
\begin{document}
\title{Entanglement transitivity problems}

\author{Gelo Noel M. Tabia}
\email{gelonoel-tabia@gs.ncku.edu.tw}
\affiliation{Department of Physics and Center for Quantum Frontiers of Research \& Technology (QFort), National Cheng Kung University, Tainan 701, Taiwan}
\affiliation{Physics Division, National Center for Theoretical Sciences, Taipei 10617, Taiwan}
\affiliation{Center for Quantum Technology, National Tsing Hua University, Hsinchu 300, Taiwan}

\author{Kai-Siang Chen}
\affiliation{Department of Physics and Center for Quantum Frontiers of Research \& Technology (QFort), National Cheng Kung University, Tainan 701, Taiwan}

\author{Chung-Yun Hsieh}
\affiliation{ICFO - Institut de Ci\`encies Fot\`oniques, The Barcelona Institute of Science and Technology, 08860 Castelldefels, Spain}

\author{Yu-Chun Yin}
\affiliation{Department of Physics and Center for Quantum Frontiers of Research \& Technology (QFort), National Cheng Kung University, Tainan 701, Taiwan}

\author{Yeong-Cherng Liang}
\email{ycliang@mail.ncku.edu.tw}
\affiliation{Department of Physics and Center for Quantum Frontiers of Research \& Technology (QFort), National Cheng Kung University, Tainan 701, Taiwan}
\affiliation{Physics Division, National Center for Theoretical Sciences, Taipei 10617, Taiwan}

\date{\today}
\begin{abstract}
One of the goals of science is to understand the relation between a whole and its parts, as exemplified by  the problem of certifying the entanglement of a system from the knowledge of its reduced states.  Here, we focus on a different but related question: can a collection of marginal information reveal new marginal information?  We answer this affirmatively and show that (non-) entangled marginal states may exhibit {\em (meta)transitivity of entanglement}, i.e., implying that a different target marginal must  be entangled.  By showing that the global $n$-qubit state compatible with certain two-qubit marginals in a tree form is unique, we prove that transitivity exists for a system involving an arbitrarily large number of qubits. We also completely characterize---in the sense of providing both the necessary and sufficient conditions---when (meta)transitivity can occur in a tripartite scenario when the two-qudit marginals given are either the Werner states or the isotropic states. Our numerical results suggest that in the tripartite scenario, entanglement transitivity is {\em generic} among the marginals derived from pure states.
\end{abstract}
 
\maketitle

\section*{Introduction}
Entanglement~\cite{horodeckicubed2009} is a  characteristic of quantum theory that profoundly distinguishes it from classical physics. The modern perspective considers entanglement as a resource for information processing tasks, such as quantum computation~\cite{nielsen00,Jozsa:2003wj,Vidal.PRL.2003,Preskill2018,McArdle2020}, quantum simulation~\cite{Georgescu2014}, and quantum metrology~\cite{Pezze2018}.  
With the huge effort devoted to scaling up quantum technologies~\cite{Ladd2010}, considerable attention has been given to the study of quantum many-body systems~\cite{Gabriele2018,cirac2012entanglement}, specifically the ability to prepare and manipulate large-scale entanglement in various experimental systems. 

As the number of parameters to be estimated is huge, entanglement detection via the so-called state tomography is often impractical.
Indeed, significant efforts have been made for detecting entanglement in many-body systems~\cite{Gabriele2018,cirac2012entanglement} using limited marginal information. 
For example, some tackle the problem using  properties of the reduced states~\cite{Toth2005-PRA,NavascuesOwariPlenio2009,JungnitschMoroderGuhne2011,Sawicki2012PRA,SperlingVogel2013,WalterDoranGrossChristandl2013,Chen.PRA.2014,MiklinMoroderGuhne2016,BohnetWaldraffBraunGiraud2017,HarrowNatarajanWu2017,SperlingVogel2018,Paraschiv2018},
while others exploit directly the data from local measurements~\cite{TothKnappGunheBriegel2007-PRL,TothKnappGuhneBriegel2009, Guhne:2009wn,GittsovichHyllusGuhne2010,deVicenteHuber2011,Bancal.PRL.2011,LiWangFeiJost2014,Liang.PRL.2015,BaccariCavalcantiWittekAcin2017,He:PRX:2018,FrerotRoscilde2021, FrerotBaccariAcin2021}.
Despite their differences, they can all be seen as some kind of {\em entanglement marginal problem} (EMP)~\cite{NavascuesBaccariAcin2021}, where the entanglement of the global system is to be deduced from some (partial knowledge of the) reduced states.

The entanglement of the global system, nonetheless, is not always the desired quality of interest. For instance, in scaling up a quantum computer, one may wish to verify that a specific subset of qubits indeed get entangled, but this generally does not follow from the entanglement of the global state (recall, e.g., the Greenberger-Horne-Zeilinger states~\cite{GHZ}).
Thus, one requires a more general version of the problem: Given certain reduced states,  can we certify the entanglement in some {\em other} target (marginal) state? We call this  the {\em entanglement transitivity problem} (ETP). Since the global system is a legitimate target system,  ETPs include the EMP as a special case.

As a concrete example beyond EMPs, one may wonder whether a set of {\em entangled} marginals are {\em sufficient} to guarantee the entanglement of some {\em other} target subsystems. If so, inspired by the work~\cite{corettihanggiwolf2011} on {\em nonlocality transitivity} of {\em post}-quantum correlations~\cite{Popescu1994}, we say that such marginals exhibit \emph{entanglement transitivity}. Indeed, one of the motivations for considering entanglement transitivity is that it is a prerequisite for the nonlocality transitivity of {\em quantum correlations}, a problem that has, to our knowledge, remained open.

More generally, one may also wonder whether {\em separable} marginals {\em alone}, or with some entangled marginals could imply the entanglement of other marginal(s). To distinguish this from the above phenomenon, we say that such marginals exhibit {\em metatransitivity}. Note that any instance of metatransitivity with only separable marginals represents a positive answer to the EMP. Here, we show that examples of both types of transitivity can indeed be found. Moreover, we completely characterize when two Werner-state~\cite{werner1989} marginals and two isotropic-state~\cite{Horodecki.isotropic} marginals may exhibit (meta)transitivity.

\section*{Results}

\subsection*{Formulation of the entanglement transitivity problems} 

Let us first stress that in an ETP, the set of given reduced states {\em must be} compatible, i.e., giving a positive answer to the quantum marginal problem~\cite{klyachko2006,TycVlach2015}. With some thought, one realizes that the simplest nontrivial ETP involves a three-qubit system where two of the two-qubit marginals are provided. Then, the problem of deciding if the remaining two-qubit marginal {\em can be} separable is an ETP  different from EMPs. 

More generally,  for any $n$-partite system $\rms$, an instance of the ETP is defined by specifying a set $\calS= \{ \rms_{i}: i = 1,2,\ldots,k\}$ of $k$ marginal systems $\rms_{i}$ (each in its respective state $\sigma_{\rms_{i}}$) and a target system $\rmt\not\in\calS$. Here, $\calS$ is a {\em strict} subset of all the $2^n$ possible combinations of at most $n$ subsystems, i.e., $k<2^n$. 
Then, $\bm{\sigma}:=\{\sigma_{\rms_{i}}\}$ exhibits entanglement (meta)transitivity in $\rmt$ if {\em for all} joint states $\rhos$ compatible with $\bm{\sigma}$, the reduced state $\rhot$ is {\em always} entangled while (not) all given $\sigma_{\rms_{i}}$ are entangled. Formally, the compatible requirement reads as: $\tr_{\rms\backslash \rms_{i}}(\rhos) = \sigma_{\rms_{i}}$ for all $\rms_{i}\in\calS$ where $\rms\backslash \rms_{i}$ denotes the complement of $\rms_{i}$ in the global system $\rms$.

Notice that for the problem to be nontrivial, there must be (1) some overlap among the subsystems specified by $\rms_i$'s, as well as with $\rmt$, and (2) the global system $\rms$ cannot be a member of $\calS$. However, the target system $\rmt$ may be chosen to be $\rms$ and if {\em all} $\sigma_{\rms_{i}}$ are separable, we recover the  EMP~\cite{NavascuesBaccariAcin2021} (see also~\cite{MiklinMoroderGuhne2016,Paraschiv2018} for some strengthened version of the EMP). Hereafter, we focus on ETPs beyond EMPs, albeit some of the discussions below may also find applications in EMPs.

\subsection*{Certification of (meta)transitivity by a linear witness}

Let $\mathcal{W}(\rho)$ be an entanglement witness~\cite{Guhne:2009wn}, i.e.,  $\mathcal{W}(\rho) \ge 0$ for all separable states in $\rmt$, and $\mathcal{W}(\rho)< 0$ for some entangled states. We can  certify the (meta)transitivity of $\calS$ in $\rmt$ if a negative optimal value is obtained for the following optimization problem:
\begin{align}
\label{eq.transitivityOpt}
\max_{\rhos} \W(\rhot), 
\text{ s.t. }  \tr_{\rms\backslash \rms_{i}}(\rhos) = \sigma_{\rms_{i}} \forall\,\, \rms_{i}\in\calS, \,\, \rhos \succeq 0,
\end{align}
where $\tr(\rhos)=1$ is implied by the compatibility requirement and ``$\succeq$" denotes matrix positivity.
Then, $\mathcal{W}$ detects the entanglement in $\rmt$ from the given marginals in $\calS$.

Consider now a linear entanglement witness, i.e., $\mathcal{W}(\rhot) = \tr\left[\rhos(W_{\rmt}\otimes\mathbb{I}_{\rms\backslash\rmt})\right]$ for some Hermitian operator $W_{\rmt}$, where $\rhot=\tr_{\rms\backslash\rmt}(\rhos)$ is the reduced state of $\rho$ in $\rmt$.
In this case, \cref{eq.transitivityOpt} is a semidefinite program~\cite{boyd2004convex}. Interestingly,  its dual problem~\cite{boyd2004convex} can be seen as the problem of minimizing the total interaction energies among the subsystems $\rms_i$ while ensuring that the global Hamiltonian is non-negative, see Supplementary Note~1.

Hereafter, we focus, for simplicity, on $\rmt$ being a two-body system. Then, a convenient witness is that due to the positive-partial-transpose (PPT)
 criterion~\cite{peres1996,horodecki1996}, with $W_\rmt = \eta_\rmt^\Gamma$, where $\eta_\rmt\succeq 0$ and $\Gamma$ denotes the partial transposition operation. Further minimizing the optimum value of \cref{eq.transitivityOpt} over all $\eta_\rmt$ such that $\tr(\eta_\rmt)=1$ gives an optimum $\lambda^*$ that is provably the smallest eigenvalue of all compatible $\rhot^\Gamma$ (see Supplementary Note~1). Hence, $\lambda^*<0$ is a sufficient condition for witnessing the entanglement (meta)transitivity of the given $\bm{\sigma}$ in $\rmt$. 

Three remarks are now in order. Firstly, the ETP defined above is straightforwardly generalized to include {\em multiple} target systems $\{ \rmt_{j}: j = 1,\ldots,t \}$ with $\rmt_j\not\in\calS$ for all $j$. A certification of the joint (meta)transitivity is then achieved by certifying each $\rmt_j$ separately. Secondly, other entanglement witnesses~\cite{Guhne:2009wn} may  be considered. For instance, to certify the entanglement of a two-body $\rhot$ that is PPT~\cite{Horodecki:PRL:1998}, a witness based on the computable cross-norm/ realignment (CCNR) criterion~\cite{Rudolph2005,ChenWu2003,Horodecki2006, Shang2018}, 
may be employed. Finally, for a multipartite target system, a witness tailored for detecting the genuine multipartite entanglement in $\rhot$ (see, e.g., Ref.~\cite{JungnitschMoroderGuhne2011,SperlingVogel2013}) is surely of interest.

\begin{figure}[t!]
    \centering
\begin{tikzpicture}
\begin{scope}[every node/.style={circle,fill,inner sep=0pt, minimum size = 1pt, scale = 0.7}]
    \node (A) at (0,0.3) {A};
    \node (B) at (0.75,0.3) {B};
    \node (C) at (1.5,0.3) {C};
    \node (D) at (2.25,0.3) {D};
\end{scope}

\begin{scope}
    \node at (0, 0.7) {$\ra$};
    \node at (0.75, 0.7) {$\rb$};
    \node at (1.5, 0.7) {$\rc$};
    \node at (2.25, 0.7) {$\rd$};
\end{scope}

\begin{scope}[every node/.style={fill=white,circle},
              every edge/.style={draw=black,very thick}]
    \path [-] (A) edge (B);
    \path [-] (B) edge (C);
    \path [-] (C) edge (D);
\end{scope}
\begin{scope}[every node/.style={circle,fill,inner sep=0pt, minimum size = 1pt, scale = 0.7}]
    \node (A) at (3.5,0) {A};
    \node (B) at (4.25,0) {B};
    \node (C) at (5.0,0) {C};
    \node (D) at (4.25,0.75) {D};
\end{scope}
\begin{scope}
    \node at (3.1, 0) {$\rb$};
    \node at (4.25, -0.4) {$\rc$};
    \node at (5.4, 0) {$\rd$};
    \node at (4.25, 1.15) {$\ra$};
\end{scope}

\begin{scope}[every node/.style={fill=white,circle},
              every edge/.style={draw=black,very thick}]
    \path [-] (A) edge (D);
    \path [-] (B) edge (D);
    \path [-] (C) edge (D);
\end{scope}

\begin{scope}[every node/.style={circle,fill,inner sep=0pt, minimum size = 1pt, scale = 0.7}]
    \node (A) at (6.25,0) {A};
    \node (B) at (6.25,0.75) {B};
    \node (C) at (7, 0.75) {C};
    \node (D) at (7,0) {D};
\end{scope}
\begin{scope}[every node/.style={fill=white,circle},
              every edge/.style={draw=black,very thick}]
    \path [-] (A) edge (B);
    \path [-] (B) edge (C);
    \path [-] (C) edge (A);
\end{scope}

\begin{scope}
    \node at (1.125,-0.9)  {(a)};
    \node at (4.25,-0.9)  {(b)};
     \node at (6.625,-0.9)  {(c)};
\end{scope}
\end{tikzpicture}

    \caption{Tree graph. A tree graph is any undirected acyclic graph such that a unique path connects any two vertices. Graph (a) and (b) are the only two nonisomorphic trees with $(n-1)$ edges for $n = 4$. Graph (c) is not a tree because it is disconnected and has a cycle.}
    \label{fig.TreeGraph}
\end{figure}
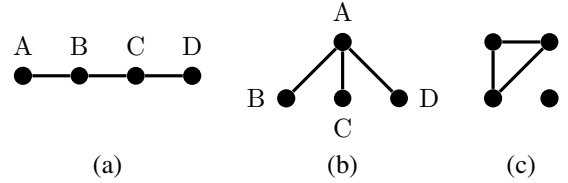

\subsection*{A family of transitivity examples with \texorpdfstring{$n$}{n} qubits}

As a first illustration, let $\ket{\Psi^+}=\tfrac{1}{\sqrt{2}}(\ket{10} +\ket{01})$ and consider: 
\begin{equation}
    \rho_{n}(\gamma) = \left(\tfrac{n-2\gamma}{n} \right)
    \proj{00} + \tfrac{2\gamma}{n}\proj{\Psi^+}, \quad n\ge 3,
\end{equation}
which is a two-qubit reduced state of
$\Omega_{n}(\gamma) = \gamma\proj{W_n} + (1-\gamma)\proj{0^n}$,
i.e., a mixture of $\ket{0^n}$ and an $n$-qubit $W$ state
$\ket{W_n} = \tfrac{1}{\sqrt{n}} \sum_{j=1}^{n}\ket{1_{j}}$,
where $1_j$ denotes an $n$-bit string with a 1 in position $j$ and 0 elsewhere.
Now, imagine drawing these $n$ qubits as vertices of a tree graph~\cite{BenderWilliamson2010} with $(n-1)$ edges, see Fig.~\ref{fig.TreeGraph}, such that every edge corresponds to a pair of qubits in the state $\rho_{n}(\gamma)$, that is, 
\begin{equation}\label{Eq:Wstate_marginal}
    \tr_{\rms\backslash \rms_i} (\rho) = \sigma_{\rms_i}=\rho_n(\gamma) \,\,\forall\,\, \rms_i \in \calS,
\end{equation}
where $\calS$ represents the set of edges. Then we prove the following result:
\begin{theorem}\label{Thm.uniqueness}
For any tree graph with $n$ vertices that satisfies ~\cref{Eq:Wstate_marginal}, $\Omega_{n}(\gamma) = \gamma\proj{W_n} + (1-\gamma)\proj{0^n}$ is the unique global state and all the two-qubit reduced states are $\rho_n(\gamma)$.
\end{theorem}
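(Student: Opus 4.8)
The plan is to reconstruct, in the computational basis, every global state $\rho$ compatible with the edge marginals and to show it must coincide with $\Omega_{n}(\gamma)$; existence is immediate, since a direct computation gives $\tr_{\rms\backslash\rms_i}\Omega_n(\gamma)=\rho_n(\gamma)$ for every pair of qubits, so only uniqueness needs work. The starting observation is that $\rho_n(\gamma)$ is supported entirely on $\mathrm{span}\{\ket{00},\ket{\Psi^+}\}$; in particular it annihilates both $\ket{11}$ and the antisymmetric vector $\ket{\Psi^-}=\tfrac1{\sqrt2}(\ket{01}-\ket{10})$. I would turn each such vanishing expectation into a hard constraint on the support of $\rho$ using the elementary fact that if $\rho\succeq0$, $P\succeq0$ and $\tr(\rho P)=0$, then $\rho P=0$.

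Concretely, for every edge $(i,j)$ of the tree I obtain two kernel conditions by taking $P=\proj{1}_i\otimes\proj{1}_j\otimes\id$ and $P=\proj{\Psi^-}_{ij}\otimes\id$: (a) the support of $\rho$ is orthogonal to every $\ket{1}_i\ket{1}_j\otimes\ket{w}$, and (b) it is orthogonal to every $\ket{\Psi^-}_{ij}\otimes\ket{w}$. Writing a support vector as $\ket{\psi}=\sum_x c_x\ket{x}$, condition (a) forces $c_x=0$ whenever the excitation pattern of $x$ contains the two endpoints of an edge, i.e.\ $\rho$ lives on bit-strings whose excited sites form an independent set; condition (b) forces $c_x=c_{x'}$ whenever $x,x'$ differ by transporting a single excitation across one edge, with all other sites fixed.

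The crux --- and the step I expect to be the main obstacle --- is upgrading these to the statement that $\rho$ is supported on the subspace of at most one excitation, i.e.\ that every multi-excitation independent-set configuration has vanishing amplitude. Here the tree (connectivity) enters essentially. Given an independent-set string $x$ with at least two excited sites, I would pick two excited vertices $a,b$ and the unique tree-path joining them, then use (b) repeatedly to ``walk'' one excitation from $a$ toward $b$ one edge at a time, each move preserving the amplitude ($c_x=c_{x^{(1)}}=c_{x^{(2)}}=\cdots$). Because $b$ stays excited, the travelling excitation must eventually land on a site adjacent to an already-excited vertex, producing a configuration forbidden by (a) whose amplitude is therefore zero; chaining the equalities back gives $c_x=0$. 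The delicate points to get right are that each intermediate swap is a genuine $1\leftrightarrow0$ exchange across an edge and that the walk is guaranteed to reach a forbidden configuration in finitely many steps.

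Once the support is confined to $\mathrm{span}\{\ket{0^n},\ket{1_1},\dots,\ket{1_n}\}$, I would finish by pinning down the three remaining blocks from the marginals and positivity. The single-site populations give $\bra{1_j}\rho\ket{1_j}=\gamma/n$, and the off-diagonal of $\rho_n(\gamma)$ gives $\bra{1_i}\rho\ket{1_j}=\gamma/n$ on every edge; viewing the single-excitation block $M$ as a Gram matrix $M_{ij}=\langle v_i,v_j\rangle$ with $\|v_i\|^2=\gamma/n$, the saturated Cauchy--Schwarz equality on each edge forces $v_i=v_j$, and connectivity propagates this to $v_1=\cdots=v_n$, so $M=\tfrac{\gamma}{n}J$ with $J$ the all-ones matrix --- exactly $\gamma\proj{W_n}$ on this sector. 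The vanishing $\ket{00}$--$\ket{01}$ coherence of $\rho_n(\gamma)$ kills the vacuum--single coherences, and normalization fixes the vacuum weight to $1-\gamma$. Assembling the blocks yields $\rho=\Omega_n(\gamma)$, and since all of its two-qubit marginals --- not only those on the edges --- equal $\rho_n(\gamma)$, the final claim follows.
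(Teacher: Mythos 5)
Your proposal is correct and takes essentially the same route as the paper's proof in Supplementary Note~2: your two kernel conditions obtained from $\tr(\rho P)=0\Rightarrow\rho P=0$ (orthogonality of the support to $\ket{11}_{ij}\otimes\ket{w}$ and to $\ket{\Psi^-}_{ij}\otimes\ket{w}$ for every edge) are exactly parts (i) and (ii) of the paper's Lemma, and your argument walking an excitation along the unique tree path until it lands next to another excited site is precisely how the paper confines the support to $\mathrm{span}\{\ket{0^n},\ket{1_1},\dots,\ket{1_n}\}$. The only (valid) variation is the finish: the paper equalizes the single-excitation amplitudes within each eigenvector via part (ii) and removes the $\ket{0^n}$--$\ket{W_n}$ coherence through the ensemble constraint $\sum_\ell c_\ell\,\beta^{(\ell)}_0\beta^{(\ell)*}_1=0$ derived from $\bra{00}\rho_n(\gamma)\ket{\Psi^+}=0$, whereas you pin down the one-excitation block directly as a Gram matrix with Cauchy--Schwarz saturated on every edge and kill the vacuum--single coherences entrywise from $\bra{00}\rho_n(\gamma)\ket{01}=0$---an equivalent and equally sound endgame.
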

The details of its proof can be found in Supplementary Notes 2.
Thus, these $\rho_{n}(\gamma)$ exhibit transitivity for any of the $\frac{(n-1)(n-2)}{2}$ pairs of qubits that are {\em not} linked by an edge. Indeed, the symmetry of $\Omega_{n}(\gamma)$ implies that all its two-qubit marginals are $\rho_{n}(\gamma)$, and the smallest eigenvalue of $\rho_{n}(\gamma)^\Gamma$ is 
$\lambda^*=\tfrac{(n-2\gamma)-\sqrt{(n-2\gamma)^2+4\gamma^2}}{2n}<0$ for $\gamma\in(0,1]$.

We should clarify that the transitivity exhibited by $\rho_n(\gamma)$ requires a tree graph only in that it represents the minimal amount of marginal information for the global state to be uniquely determined. Any other $n$-vertex graph with equivalent marginal information or more leads to the same conclusion.

These examples involve only entangled marginals. Next, we present examples where some of the given marginals are separable. In particular, we provide a {\em complete} solution of the ETPs with the input marginals being  a Werner state~\cite{werner1989} or an isotropic state~\cite{Horodecki.isotropic}.

\begin{figure}
    \centering
    \includegraphics[width=0.65\linewidth]{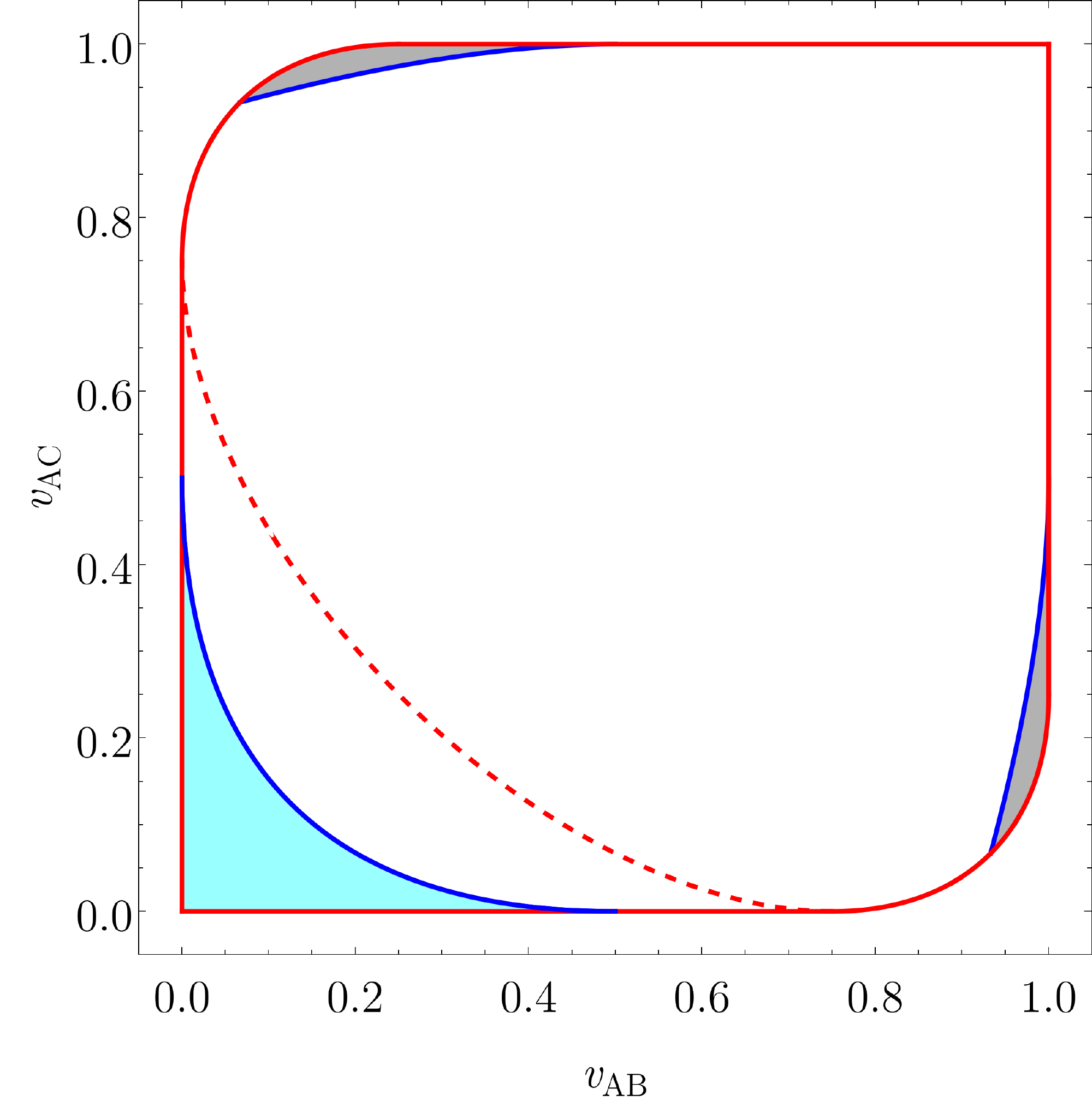}
    \caption{Parameter space for a pair of Werner state marginals with, respectively, weight $v_{\rab}$ and $v_{\rac}$ on the symmetric subspace. 
    For $d\ge 3$  the compatible region for the pair is enclosed by the solid red line, but for $d=2$ it is restricted to the portion above the dotted line.
    The blue curves (being parts of two parabolas) describe boundaries where the largest compatible $v_{\rbc}$ is $\tfrac{1}{2}$. Regions exhibiting (meta)transitivity are shaded in (gray) cyan. 
    }
    \label{fig:MetatransitivityQuditWerner}
\end{figure}

\subsection*{Metatransitivity from Werner state marginals}
A Werner state~\cite{werner1989} $W_d(v)$ is a two-qudit density operator invariant under arbitrary $U\otimes U$ unitary transformations, where $U$ belongs to the set of $d$-dimensional unitaries $\mathcal{U}_d$ for finite $d$. Let $P^{d}_\mathrm{s} (P^{d}_\mathrm{as}) $ be the projection onto the symmetric (antisymmetric) subspace of $\mathbb{C}^d\otimes\mathbb{C}^d$. Then we can write qudit Werner states as the one-parameter family~\cite{werner1989}
\begin{align}\label{Eq:Werner}
    W_d(v) &= v \tfrac{2}{d(d+1)}P^{d}_\mathrm{s} + (1-v)\tfrac{2}{d(d-1)} P^{a}_\mathrm{as}, &
     v\in [0,1].
\end{align}
Consider a pair of Werner states $\bm{\sigma}=\{W_d(v_{\rab}), W_d(v_{\rac})\}$ that are the marginals of some joint state $\rho_{\rabc}$. Then the Werner-twirled state~\cite{EggelingWerner2001}
$\widetilde{\rho}_{\rabc} = \int d\mu_U (U\otimes U \otimes U) \rho_{\rabc} (U\otimes U \otimes U)^\dag, $
where $\mu_U$ is a uniform Haar measure over $\mathcal{U}_d$, is trivially verified to be a valid joint state for these marginals. Moreover, $\widetilde{\rho}_{\rabc}$ has a Werner state $W_d(v_{\rbc})$ as its BC marginal. 

Importantly, the  aforementioned twirling bringing $\rho_{\rabc}$ to $\widetilde{\rho}_{\rabc}$ is achievable by local operations and classical communications (LOCC). Since LOCC cannot create entanglement from none, if the BC marginal $\widetilde{\rho}_{\rbc}$ of $\widetilde{\rho}_{\rabc}$ is entangled, so must the BC marginal $\rho_{\rbc}$ of $\rho_{\rabc}$. Conversely, since $\widetilde{\rho}_{\rabc}$ is a legitimate joint state of the given  marginals $\bm{\sigma}$, if $\widetilde{\rho}_{\rbc}$ is separable, by definition, the given marginals $\bm{\sigma}$ cannot exhibit transitivity. Without loss of generality, we may thus restrict our attention to  a Werner-twirled joint state $\widetilde{\rho}_{\rabc}$. Then, since a Werner state  $W_d(v)$ is entangled {\em if and only if} ({\em iff})~\cite{werner1989} $v\in[0,\tfrac{1}{2})$, combinations of Werner state marginals $W_d(v_{\rab})$ and $W_d(v_{\rac})$ leading to  $\widetilde{\rho}_{\rbc}=W_d(v_{\rbc})$ with $v_{\rbc}<\tfrac{1}{2}$ must exhibit entanglement (meta)transitivity.

Next, let us recall from Ref.~\cite{johnsonviola2013} the following characterization:  three Werner states with parameters $\vec{v} = (v_{\rab}, v_{\rac}, v_{\rbc})$ are compatible {\em iff} the vector $\vec{v}$ lies within the bicone given by
$f(\vec{v}) \ge g(\vec{v})$
and $3 - f(\vec{v})  \ge g(\vec{v})$,
where 
$f(\vec{v}) = v_{\rab} + v_{\rac} + v_{\rbc}$ and
$g(\vec{v}) =\sqrt{3(v_{\rac} - v_{\rab})^2 + (2v_{\rbc} - v_{\rab} - v_{\rac})^2}$.
To find the (meta)transitivity region for $(v_{\rab},v_{\rac})$, it suffices to determine the boundary where the largest compatible $v_{\rbc} = \tfrac{1}{2}$. These boundaries are found (see Supplementary Note~3) to be the two parabolas
$(v_{\rab} - v_{\rac}-\tfrac{1}{2})^2 = 2(1-v_{\rab})$ and
$(v_{\rab} + v_{\rac}-\tfrac{1}{2})^2 = 4v_{\rab}v_{\rac}$, mirrored along the line $v_{\rab} + v_{\rac} = 1$, as shown in ~\cref{fig:MetatransitivityQuditWerner}. It also shows the compatible regions of $(v_{\rab},v_{\rac})$ obtained directly from Ref.~\cite{johnsonviola2013}, and the desired (shaded) regions exhibiting the (meta)transitivity of these marginals. 
In particular, the lower-left region corresponds to (a) while the top-left and bottom-right regions correspond to (b) in Fig.~\ref{fig:MetaEx}.
Remarkably, these results hold for arbitrary Hilbert space dimension $d\ge 2$ (but for $d=2$, the lower-left shaded region does not correspond to compatible Werner marginals).

\begin{figure}[t!]
    \centering
    \includegraphics[width=0.65\linewidth]{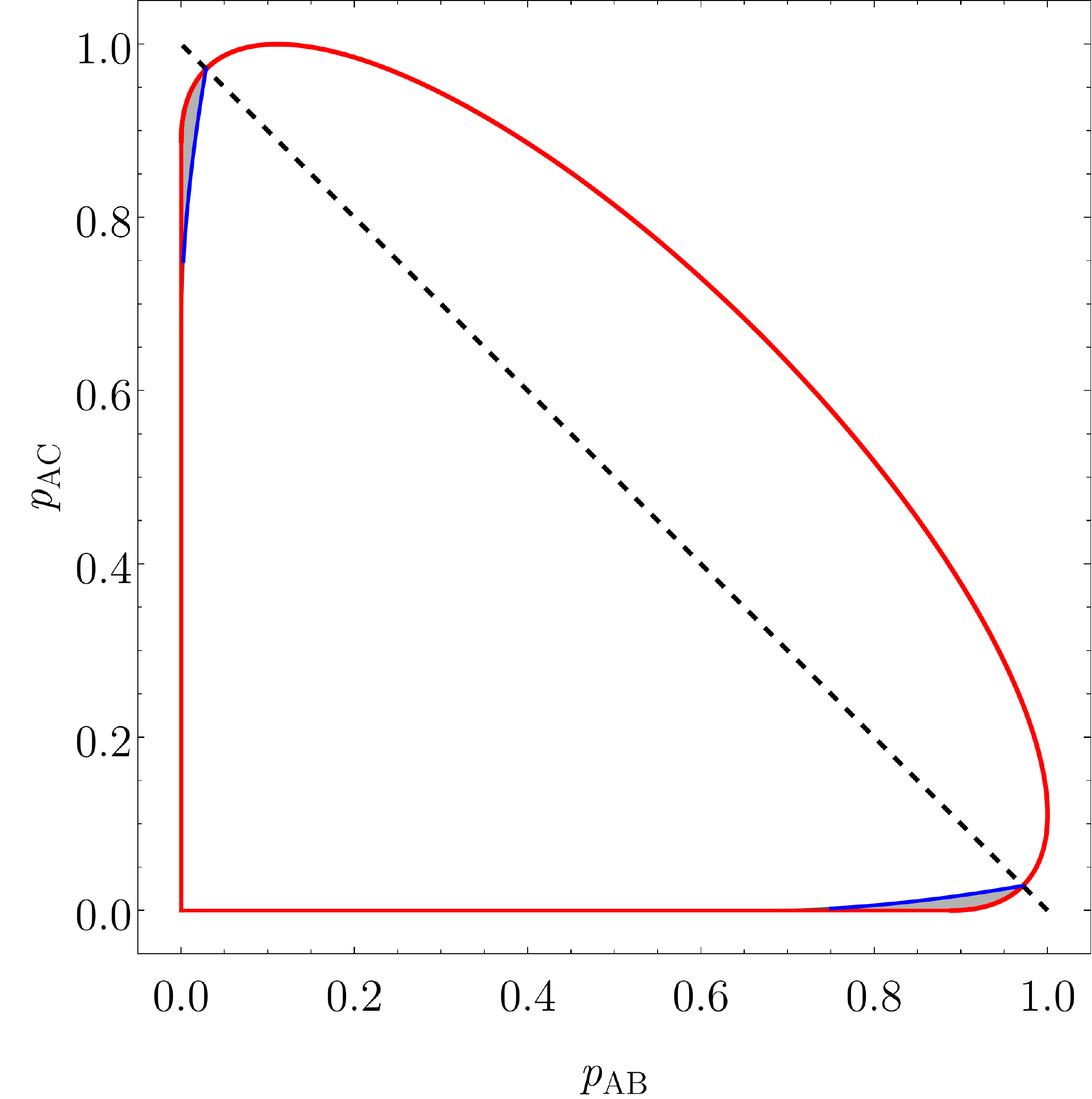}
    \caption{Parameter space for a pair of isotropic state marginals with, respectively, fully entangled fraction $p_{\rab}$ and $p_{\rac}$. 
    The compatible region for the pair is enclosed by the solid red line. The blue curves (shown for the case of $d=3$) marks the boundary where the largest compatible $v_{\rbc}$ is $\tfrac{1}{2}$. Regions exhibiting metatransitivity are shaded in gray, which shrink with increasing $d$, as the upper red curve flattens towards the dashed black line and the blue curves approach the two axes.} 
    \label{fig:MetatransitivityQuditIsotropic}
\end{figure}

\subsection*{Metatransitivity from isotropic state marginals}
An isotropic state~\cite{Horodecki.isotropic} is a bipartite density operator in $\mathbb{C}^d\otimes\mathbb{C}^d$ that is invariant under $U\otimes \overline{U}$ (or $\overline{U}\otimes U$) transformations for any unitary $U \in \mathcal{U}_d$; here, $\overline{U}$ is the complex conjugation of $U$. We can write qudit isotropic states as a one-parameter family~\cite{Horodecki.isotropic}
\begin{equation}
    \mathcal{I}_d(p) = p \proj{\Phi_d} +
    \tfrac{1-p}{d^2-1} \left(\mathbb{I}_{d^2} - \proj{\Phi_d} \right),
\end{equation}
where 
$\ket{\Phi_d} = \tfrac{1}{\sqrt{d}}\sum_{j=0}^{d-1} \ket{j}\ket{j}$ 
and $p$ gives the fully entangled fraction~\cite{Horodecki1999,Albeverio2002} of $\mathcal{I}_d(p)$.

Consider now a pair of isotropic marginals $\bm{\sigma}=\{\mathcal{I}_d(p_{\rab}), \mathcal{I}_d(p_{\rac})\}$ as the reduced states of some joint state $\tau_{\rabc}$. Then the ``twirled'' state 
$\widetilde{\tau}_{\rabc} = \int d\mu_U (\overline{U}\otimes U \otimes U) \tau_{\rabc} (\overline{U}\otimes U \otimes U)^\dag $, which has a Werner state marginal $W_d(v_{\rbc})$ in BC,
is easily verified to be a valid joint state for the given marginals. As in the case of given Werner states marginals,  it suffices to consider $\widetilde{\tau}_{\rabc}$ in determining the region of $(p_{\rab},p_{\rac})$ that demonstrates metatransitivity. 

To this end, note that two isotropic states and one Werner state with parameters $\vec{p} = (p_{\rab},p_{\rac}, v_{\rbc})$ are compatible {\em iff}~\cite{johnsonviola2013} the vector $\vec{p}$ lies within the convex hull of the origin $\vec{p}_0 = (0,0,0)$ and the cone given by 
$\alpha_{+} \le 1 + \tfrac{1}{d}(\beta+1)$
and 
$d\alpha_{+} - \beta \ge d\sqrt{\left( \alpha_{+} + \beta \right)^2 + \left(\tfrac{d+1}{d-1}\right)\alpha_{-}}$,
where $\alpha_{\pm} = p_{\rab} \pm p_{\rac}$
and $\beta = 2(v_{\rbc} -1)$.
To find the metatransitivity region for $(p_{\rab},p_{\rac})$ we again look for the boundary where the largest compatible $v_{\rbc} = \tfrac{1}{2}$, which we show in Supplementary Note~4 to be $4 p_{\rab}p_{\rac} = (p_{\rab} + p_{\rac} -1 + \tfrac{1}{d})^2$. The resulting regions of interest are illustrated for the $d=3$ case in ~\cref{fig:MetatransitivityQuditIsotropic}, and they correspond to (b) in Fig.~\ref{fig:MetaEx}.

\subsection*{Metatransitivity with only separable marginals}
Curiously, none of the infinitely compatible pairs of marginals given above result in the most exotic type of metatransitivity, even though there are known examples where separable marginals imply a global entangled state (see, e.g.,~\cite{Toth2005-PRA,TothKnappGunheBriegel2007-PRL,TothKnappGuhneBriegel2009,NavascuesBaccariAcin2021}). In the following, we provide examples where
the entanglement of a subsystem is implied by only separable marginals. This already occurs in the simplest case of a three-qubit system. 
Consider the rank-two mixed state $\chi_{\rabc} = \tfrac{1}{4}\proj{\chi_{1}} + \tfrac{3}{4}\proj{\chi_{2}}$ where
\begin{align}
\label{eq.SepABSepBCEntAC}
\nonumber
    \ket{\chi_1} &= 
    \begin{pmatrix}
 \frac{1}{3} & \frac{1}{12} & -\frac{\sqrt{7}}{12} & 0 & \frac{\sqrt{7}}{12} & -\frac{1}{3} & -\frac{3}{4} & \frac{1}{3} 
\end{pmatrix}\Tp, \\
    \ket{\chi_2} &= 
    \begin{pmatrix}
 -\frac{1}{2} & \frac{\sqrt{5}}{24} & \frac{1}{6} & \frac{1}{8} & -\frac{1}{3} & -\frac{3}{4} & \frac{\sqrt{5}}{24} & \frac{1}{8} 
\end{pmatrix}\Tp.
\end{align}
It can be easily checked that the AB and BC marginals of $\chi_{\rabc}$ are PPT, which suffices~\cite{horodecki1996} to guarantee their separability, while \cref{eq.transitivityOpt} with the PPT criterion can be used to confirm that AC is always entangled. Thus, this example corresponds to (c) in Fig.~\ref{fig:MetaEx}.
Likewise, examples exhibiting different kinds of transitivity can also found in higher dimensions (with bound entanglement~\cite{Horodecki:PRL:1998}) or with more subsystems, see Supplementary Note~5 for details. 

Here, we present one such example to illustrate some of the subtleties of ETPs in a scenario involving more than three subsystems. Consider the four-qubit pure state
\begin{align}
    \label{eq.SepABSepBCSepCDEntACADBD}
    \nonumber
    \ket{\xi}_{\rm ABCD} &= 
    \left( \tfrac{1}{45},-\tfrac{1}{3},\tfrac{1}{3},\tfrac{1}{9},\tfrac{2}{9},-\tfrac{1}{4},-\tfrac{2}{5},\tfrac{1}{9}, \right. \\
   & \qquad \left.
   \tfrac{\sqrt{10}}{36},\tfrac{1}{9},-\tfrac{1}{9},-\tfrac{1}{4},-\tfrac{1}{2},\tfrac{1}{9},-\tfrac{1}{9},\tfrac{1}{3} \right)\Tp.
\end{align}
One can readily check that its AB, BC, and CD marginals are PPT and are thus separable. At the same time, one can verify using \cref{eq.transitivityOpt} with the PPT criterion that these three marginals together imply the entanglement of {\em all} the three remaining two-qubit marginals. Thus, this corresponds to (d) in Fig.~\ref{fig:MetaEx}.

At this point, one may think that the entanglement in the AC marginal already follows from the given AB and BC marginals, analogous to the tripartite examples presented above. This is misguided: the CD marginal is essential to force the AC marginal to be entangled. Similarly, the AB marginal is indispensable to guarantee the entanglement of BD. Thus, the current metatransitivity example illustrates a genuine four-party effect that cannot exist in any tripartite scenario. For completeness, an example exhibiting the same four-party effect but where all input two-qubit marginals are entangled is also provided in Supplementary Note~5.

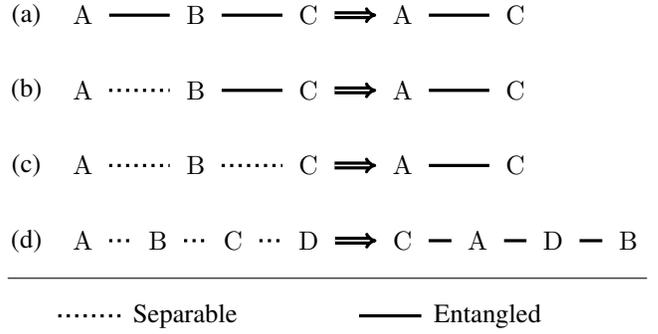
\begin{figure}[t!]
    \centering
\begin{tikzpicture}
\draw (0,5) -- (8.5,5);
\begin{scope}
    \node at (0.25,4.5)  {(a)};
    \node at (0.25,3.5)  {(b)};
    \node at (0.25,2.5)  {(c)};
    \node at (0.25,1.5)  {(d)};
\end{scope}

\begin{scope}[every node/.style={circle}]
    \node (A) at (1,4.5) {$\mathrm{A}$};
    \node (B) at (2.5,4.5) {$\mathrm{B}$};
    \node (C) at (4,4.5) {$\mathrm{C}$};
    \node (D) at (5.25,4.5) {$\mathrm{A}$};
    \node (E) at (6.75,4.5) {$\mathrm{C}$};
\end{scope}
\begin{scope}[every node/.style={fill=white,circle},
              every edge/.style={draw=black,very thick}]
    \path [-] (A) edge (B);
    \path [-] (B) edge (C);
    \path [-] (D) edge (E);
    \path [-](C) edge[double, -{Implies}] (D);
\end{scope}

\begin{scope}[every node/.style={circle}]
    \node (A) at (1,3.5) {$\mathrm{A}$};
    \node (B) at (2.5,3.5) {$\mathrm{B}$};
    \node (C) at (4,3.5) {$\mathrm{C}$};
    \node (D) at (5.25,3.5) {$\mathrm{A}$};
    \node (E) at (6.75,3.5) {$\mathrm{C}$};
\end{scope}
\begin{scope}[every node/.style={fill=white,circle},
              every edge/.style={draw=black,very thick}]
    \path [-] (A) edge[dotted] (B);
    \path [-] (B) edge (C);
    \path [-] (D) edge (E);
    \path [-](C) edge[double, -{Implies}] (D);
\end{scope}

\begin{scope}[every node/.style={circle}]
    \node (A) at (1,2.5) {$\mathrm{A}$};
    \node (B) at (2.5,2.5) {$\mathrm{B}$};
    \node (C) at (4,2.5) {$\mathrm{C}$};
    \node (D) at (5.25,2.5) {$\mathrm{A}$};
    \node (E) at (6.75,2.5) {$\mathrm{C}$};
\end{scope}
\begin{scope}[every node/.style={fill=white,circle},
              every edge/.style={draw=black,very thick}]
    \path [-] (A) edge[dotted] (B);
    \path [-] (B) edge[dotted] (C);
    \path [-] (D) edge (E);
    \path [-](C) edge[double, -{Implies}] (D);
\end{scope}

\begin{scope}[every node/.style={circle}]
    \node (A) at (1,1.5) {$\mathrm{A}$};
    \node (B) at (2,1.5) {$\mathrm{B}$};
    \node (C) at (3,1.5) {$\mathrm{C}$};
    \node (D) at (4,1.5) {$\mathrm{D}$};
    
    \node (E) at (5.25,1.5) {$\mathrm{C}$};
    \node (F) at (6.25,1.5) {$\mathrm{A}$};
    \node (G) at (7.25,1.5) {$\mathrm{D}$};
    \node (H) at (8.25,1.5) {$\mathrm{B}$};
\end{scope}
\begin{scope}[every node/.style={fill=white,circle},
              every edge/.style={draw=black,very thick}]
    \path [-] (A) edge[dotted] (B);
    \path [-] (B) edge[dotted] (C);
    \path [-] (C) edge[dotted] (D);
    
    \path [-] (E) edge (F);
    \path [-] (F) edge (G);
    \path [-] (G) edge (H);
    \path [-](D) edge[double, -{Implies}] (E);
\end{scope}

\begin{scope}[every node/.style={circle}]
    \node (A) at (0.5,0.5) {};
    \node[right] (B) at (1.5,0.5) {Separable};
    \node (C) at (4.5,0.5) {};
    \node[right] (D) at (5.5, 0.5) {Entangled};
\end{scope}
\begin{scope}[every node/.style={fill=white,circle},
              every edge/.style={draw=black,very thick}]
    \path [-] (A) edge[dotted] (B);
    \path [-] (C) edge (D);
\end{scope}
\draw (0,1) -- (8.5,1);

\end{tikzpicture}

    \caption{A schematic diagram for the metatransitivity examples. Each row describes the known bipartite marginals in a 3- or 4-partite system and the target subsystems where metatransitivity are exhibited. }
    \label{fig:MetaEx}
\end{figure}

\subsection*{Metatransitivity from marginals of random pure states}

Naturally, one may wonder how common the phenomena of (meta)transitivity is. Our numerical results based on pure states randomly generated according to the Haar measure suggest that transitivity {\em is} generic in the tripartite scenario: for local dimension up to five, {\em all} sampled pure states have only non-PPT marginals and demonstrate entanglement transitivity. However, with more subsystems,  (meta)transitivity seems rare. For example, among the $10^5$  sampled four-qubit states, only about $7.32\%$ show transitivity while about $3.38\%$ show metatransitivity. For a system with even more subsystems or with a higher $d$, we do not find any example of (meta)transitivity from random sampling (see Table~\ref{Tbl.Numerics}).

Next, notice that for the convenience of verification, some explicit examples that we provide actually involve marginals leading to a {\em unique} global state. However, uniqueness is not {\em a priori} required for entanglement (meta)transitivity. For example, among those quadripartite (meta)transitivity examples found for randomly sampled pure states, $>73\%$ of them (see Supplementary Note~5) are {\em not} uniquely determined from three of its two-qubit marginals (cf. Ref.~\cite{Jones.PRA.2005,Huber.PRL.2016,wyderkahuberguhne2017}). In contrast, most of the tripartite numerical examples found appear to be uniquely determined by two of their two-qudit marginals, a fact that may be of independent interest (see, e.g., Ref.~\cite{lindenpopescuwootters2002,Linden.PRL.2002b,Diosi.PRA.2004,Han.PRA.2005}).

\section*{Discussion} 
\label{sec.discussion}

The example involving noisy $W$-state marginals demonstrate that the transitivity can occur for arbitarily long chain of quantum systems.
This leads us to consider metatransitivity with only separable marginals. Beyond the example given above, we present also in Supplementary Note~5 a five-qubit example with four separable marginals and discuss some possibility to extend the chain. For future work, it could be interesting to determine if such exotic metatransitivity examples exist at the two ends of an arbitrarily long chain of multipartite system. For the closely related EMP, we remind that an explicit construction for a state with only two-body separable marginals and an arbitrarily large number of subsystems is known~\cite{Paraschiv2018} (see also Ref.~\cite{NavascuesBaccariAcin2021}).

So far, we have discussed only cases where both the input marginals and the target marginal are for two-body subsystems. If entanglement can be deduced from two-body marginals, it is also deducible from higher-order marginals that include the former from coarse graining. Hence, the consideration of two-body input marginals allows us to focus on the crux of the ETP. As for the target system, we provide---as an illustration---in Supplementary Note~5 an example where the three two-qubit marginals of \cref{fig.TreeGraph}(b) imply the genuine three-qubit entanglement present in BCD. Evidently, there are many other possibilities to be considered in the future, as entanglement in a multipartite setting is known~\cite{horodeckicubed2009,Guhne:2009wn} to be far richer.

Our metatransitivity examples also illustrate the disparity between the local compatibility of probability distributions and quantum states. Classically, probability distributions $P(A,B)$ and $P(B,C)$  compatible in $P(B)$ always have a joint distribution $P(A,B,C)$ (this extends to the multipartite case for marginal distributions that form a tree graph~\cite{wolfVerstraeteCirac2003}). One may think that the quantum analogue of this is: compatible $\rho_{\rab}$ and $\rho_{\rbc}$ must imply a separable joint state, and hence a separable $\rho_{\rac}$. However, our metatransitivity example (as with nontrivial instances of tripartite EMPs), illustrates that this generalization does not hold. Rather, as we show in Supplementary Note~8, a possible generalization is given by classical-quantum states $\rho_{\rab}$ and $\rho_{\rbc}$ sharing the same diagonal state in $\rb$ --- in this case, metatransitivity can never be established.

Evidently, there are many other possible research directions that one may take from here. For example, as with the $W$-states, we have also observed transitivity in $n \le 3 \le d \le 6$ for qudit Dicke states~\cite{Dicke1954,WeiGoldbart2003,Aloy_2021}, which seems to be also uniquely determined by its $(n-1)$ bipartite marginals. To our knowledge, this uniqueness remains an open problem and, if proven, may allow us to establish examples of transitivity for an arbitrarily high-dimensional quantum state that involves an arbitrary number of particles. From an experimental viewpoint, the construction of witnesses specifically catered for ETPs are surely welcome.

Finally, notice that while ETPs include EMPs as a special case, an ETP may be seen as an instance of the more general resource transitivity problem~\cite{RTP}, where one wishes to certify the resourceful nature of some subsystem based on the information of {\em other} subsystems. In turn, the latter can be seen as a special case of the even more general resource marginal problems~\cite{RMP2022}, where resource theories are naturally incorporated with the marginal problems of quantum states.

\section*{Methods}

\subsection*{Metatransitivity certified using separability criteria}

As mentioned before, we can certify the entanglement (meta)transitivity of a given set of marginals in a bipartite target system $\rmt$ by demonstrating the violation of the PPT separability criterion. We can show this by solving the following convex optimization problem:
\begin{align}
\label{eq.NPTtransitivitySDP}
\nonumber
&\max_{\rhos}\qquad\qquad\qquad \lambda \\
&\text{subj. to }  \quad
 \tr_{\rms\backslash \rms_{i}}(\rhos) = \sigma_{\rms_{i}}\,\, \forall\,\, \rms_i\in \mathcal{S},\nonumber\\ 
 &\qquad\qquad\,\, \rhos \succeq 0, \quad\rhot^{\Gamma} \succeq  \lambda \mathbb{I},
\end{align}
which directly optimizes over the joint state $\rhos$ with marginals $\sigma_{\rms_i}$ such that the smallest eigenvalue $\lambda$ of $\rhot^{\Gamma}$ is maximized. Because a bipartite state that is {\em not} PPT is entangled~\cite{peres1996,horodecki1996}, if the optimal $\lambda$ (denoted by $\lambda^\star$ throughout) is negative, the marginal state in $\rmt$ of all possible joint states $\rhos$  must be entangled.

In the Supplementary Notes, we compute the Lagrange dual problem to~\cref{eq.transitivityOpt} with a linear witness $W_{T}$. A similar calculation for~\cref{eq.NPTtransitivitySDP} 
shows that it is equivalent to a dual problem with $W=\eta_\rmt^\Gamma$, where $\eta_\rmt$ being an additional optimization variable subjected to the constraint of $\eta_\rmt\succeq 0$ and $\tr(\eta_\rmt)=1$.

Meanwhile, to certify genuine tripartite entanglement in the target tripartite marginal $\rmt$, we use a simple criterion introduced in~\cite{Li2017}. 
Consider the density operator $\rho_{AB}$ on $\mathbb{C}^m \otimes \mathbb{C}^n$ to be an $m\times m$ block matrix of $n\times n$ matrices $\rho^{(i,j)}$. Let $\widetilde{\rho_{AB}}$ denote the realigned matrix obtained by transforming each block $\rho^{(i,j)}$ into rows. 
The CCNR criterion~\cite{Rudolph2005,ChenWu2003} dictates that for separable $\sigma_{\rab}$, $\lVert \widetilde{\sigma_{\rab}} \rVert_{1} \leq 1$.

Now, let ${\rm A|BC}$ denote a bipartition of a tripartite system $\rabc$ into a bipartite system with parts $\rm A$ and $\rbc$. 
Finally, for any tripartite state $\rho_{\rabc}$ on  $\mathbb{C}^d\otimes \mathbb{C}^d\otimes \mathbb{C}^d$,
define
\begin{align}
\nonumber
    M(\rho_{\rabc}) &:= \tfrac{1}{3}\left( \lVert \rho_{\rabc}^{\text{\tiny T$_\ra$}} \rVert_1 +
    \lVert \rho_{\rabc}^{\text{\tiny T$_\rb$}} \rVert_1 + \lVert \rho_{\rabc}^{\text{\tiny T$_\rc$}} \rVert_1 \right) \\
    N(\rho_{\rabc}) &:= \tfrac{1}{3}\left( \lVert \widetilde{\rho_{\rm A|BC}} \rVert_1 +
    \lVert \widetilde{\rho_{\rm B|CA}} \rVert_1 + \lVert \widetilde{\rho_{\rm C|AB}} \rVert_1 \right),
    \label{Eq.MandN}
\end{align}
where $\text{\tiny T$_{X}$}$ means a partial transposition with respect to the subsystem ${\rm X}$.
It was shown in~\cite{Li2017} that for any biseparable $\rho_\rabc$, we must have
 \begin{equation}
     \max\{M(\rho_{\rabc}), N(\rho_{\rabc}) \} \le \tfrac{1+2d}{3}.
 \end{equation}
 This means that if any of $M(\rho_{\rabc}), N(\rho_{\rabc})$ is larger than $\tfrac{1+2d}{3}$,
 $\rho_{\rabc}$ must be genuinely tripartite entangled. 
 
 Therefore in the metatransitivity problem, we can use this, cf. \cref{eq.NPTtransitivitySDP} for the bipartite target system, for detecting genuine tripartite entanglement.
 This is done by minimizing $M$ and $N$ of the target marginal and taking the larger of the two minima. 
To this end, note that the minimization of the trace norm can be cast as an SDP~\cite{BenTalNemirovskii2001}. 
  Further details can be found in Supplementary Notes 1.
 
 \subsection*{Certifying the uniqueness of a global compatible (pure) state}\label{app.certUniqueGlobal}

A handy way of certifying the (meta)transitivity of marginals $\{\sigma_{\rms_i}\}$ known to be compatible with some pure state $\ket{\psi}$ is to show that the global state $\rhos$ compatible with these marginals is {\em unique}, i.e., $\rhos$ is necessarily $\proj{\psi}$. This can be achieved by solving the following SDP:
\begin{align}
\nonumber
&\min_{\rhos}\qquad\qquad\qquad \bra{\psi}\rhos\ket{\psi} \\
&\text{subj. to }  \quad
 \tr_{\rms\backslash \rms_{i}}(\rhos) = \sigma_{\rms_{i}}\,\, \forall\,\, \rms_i\in \mathcal{S} \text{ and } \rhos  \succeq 0
\end{align}
The objective function here is the fidelity of $\rhos$ with respect to the pure state $\ket{\psi}$. If this minimum is $1$, then by the property of the Uhlmann-Jozsa fidelity~\cite{Liang:2019wb}, we know that the only compatible $\rhos$ is indeed given by $\proj{\psi}$.

For the numerical results that show how typical transitivity is for the bipartite marginals of a pure global state, the marginals are obtained from a uniform random $n$-qudit state, which is obtained by taking the first column of a $d^n$-dimensional Haar-random unitary.

\section*{Data Availability}

All relevant data supporting the main conclusions and figures of the document are available on request. Please refer to Gelo Noel Tabia at 
\href{mailto:gelonoel-tabia@gs.ncku.edu.tw}{gelonoel-tabia@gs.ncku.edu.tw}.

\section*{Acknowledgments}
We thank Antonio Ac\'{\i}n and Otfried G\"uhne for helpful comments. 
CYH is supported by ICFOstepstone (the Marie Sk\l odowska-Curie Co-fund GA665884), the Spanish MINECO (Severo Ochoa SEV-2015-0522), the Government of Spain (FIS2020-TRANQI and Severo Ochoa CEX2019-000910-S), Fundaci\'o Cellex, Fundaci\'o Mir-Puig, Generalitat de Catalunya (SGR1381 and CERCA Programme), the ERC AdG CERQUTE, and the AXA Chair in Quantum Information Science.
We also acknowledge support from 
the Ministry of Science and Technology, Taiwan (Grants No. 107-2112-M-006-005-MY2, 109-2627-M-006-004, 109-2112-M-006-010-MY3), and the National Center for Theoretical Sciences, Taiwan.

\section*{AUTHOR CONTRIBUTIONS}

GNT, CYH, and YCL formulated the problem and developed the theoretical ideas.
GNT, YCL, and YCY carried out the numerical calculations.
All explicit examples are due to GNT and KSC.
GNT and YCL prepared the manuscript with help from KSC and CYH.
All authors were involved in the discussion and interpretation of results.



\clearpage
\onecolumngrid

\begin{table}[h!]
\setlength\tabcolsep{2pt} %


\begin{tabular}{c||c|cc|ccc|cc|c}
\hline \hline
$(n,d)$ & $N_{\text{\tiny sample}}$  & NPT  & PPT & NPT   & PPT  &  NPT + PPT   & Max $(1-\mathcal{F})$ & $1-\mathcal{F}<\epsilon$ & $1-\mathcal{F}<10^{-6}$  \\ 
 & ($\times10^3$)  & (\%) &  (\%) & $\Rightarrow$ NPT (\%)  &  $\Rightarrow$ NPT (\%) &  $\Rightarrow$ NPT (\%)  &  &  (\%)  & among $\Rightarrow$ NPT (\%) \\ \hline
(3,2) & $1000$  & 100 &  0 & 100 & 0 (-) & 0 (-) & $1.21\times10^{-9}$ & 100; 100; 100 & 100 \\
(3,3) & $100$  & 100 & 0 & 100 & 0 (-) & 0 (-)  & $1.73\times10^{-6}$ & 25.51; 69.55; 99.99 & 99.99\\
(3,4) & $10$  & 100 & 0 & 100 & 0 (-) & 0 (-)  & $1.33\times10^{-6}$ & 72.77; 85.47; 99.84 & 99.84\\
(3,5) & $10$  & 100 & 0 & 100 & 0 (-)& 0 (-) & $1.29\times10^{-6}$ & 83.64; 94.31; 99.79 & 99.79\\
(4,2) & $100$   & 46.74 & 2.64 & 7.32 (15.66)  & 0.02 (0.87) & 3.36 (6.64)  &  $1^\dag$ & 0.29; 1.50; 3.20 & 26.18 \\
(4,3) & $10$ & 99.93 & 0 & 0 (0)  & 0 (-) & 0 (0) &  $1^\dag$  & 0.00; 0.00; 0.00 & -\\
(5,2) & $10$   & 0.35 & 45.75 & 0 (0)  & 0 (0) & 0 (0) &  $1^\dag$  & 0.00; 0.00; 0.00 & -\\
(5,3) & $1.030$  & 0.10 & 54.85 & 0 (0) & 0 (0) & 0 (0) & $1^\dag$ & 0.00; 0.00; 0.00 & -
\\
\hline \hline 
\end{tabular}

\centering 
\caption{\label{Tbl.Numerics} Summary of various features of uniformly sampled $n$-partite pure states of local dimension $d$ according to the Haar measure. The second column gives the number of pure states sampled $N_{\text{\tiny sample}}$ in each scenario $(n,d)$. The next two columns list the fraction of states giving $(n-1)$ neighboring two-body marginals that are, respectively,  {\em all} NPT (i.e., none of which being PPT) and {\em all} PPT. The next three columns summarize how generic the phenomenon of (meta)transivitiy is among such states when the target system $\rmt$ lie at the two ends of an $n$-body chain. 
We give from left to right, respectively, the fraction among all sampled states exhibiting transitivity (i.e., with {\em only} entangled marginals), metatransitivity with {\em only} separable marginals, and metatransitivity with mixed marginals. Enclosed in each bracket is the corresponding fraction among samples having the associated kind of marginals.
The next two columns summarize the extent to which the $(n-1)$ two-body marginals lead to a unique global pure state. These are expressed in terms of the {\em largest} value of the infidelity $1-\mathcal{F}$, where $\mathcal{F}=\min_{\rhos} \bra{\psi}\rhos\ket{\psi}$ and $\ket{\psi}$ is the sampled pure state; the three numbers listed in the second last column are, respectively, for $\epsilon=10^{-8},10^{-7}$ and $10^{-6}$. The final column shows the fraction of (meta)transitivity examples having a {\em unique} global state (with an infidelity threshold set to $10^{-6}$). Throughout, we use $1^\dag$ to represent a number that differs from $1$ by less than $10^{-8}$.
}
\end{table}

\twocolumngrid

\bibliographystyle{apsrev4-1}

\clearpage
\newpage
\setcounter{figure}{5}

\setcounter{secnumdepth}{2}
\makeatletter
\renewcommand*{\thesection}{Supplementary Note ~\arabic{section}}
\renewcommand*{\thesubsection}{\arabic{section}.\arabic{subsection}}
\renewcommand*{\p@subsection}{}
\renewcommand*{\thesubsubsection}{\thesubsection.\arabic{subsubsection}}
\renewcommand*{\p@subsubsection}{}
\makeatother

\section{Various optimization problems}
\label{app.VariousOptimization}
\subsection{Certification of entanglement (meta)transitivity via a linear witness}\label{app.certLinearWitness}

\subsubsection{Lagrange dual problem to Eq.~(1)} 
\label{App.DualProblem}

For the optimization problem in 
Eq.~(1)
where
\begin{equation}
\label{eq.linearWitness}
    \mathcal{W}(\rhot) = \tr[\rhos(W_\rmt\otimes\mathbb{I}_{\rms\backslash \rmt})]
\end{equation}
and $W_\rmt$ is some Hermitian operator, we can construct the Lagrangian~\cite{boyd2004convex}
\begin{align*}
    \mathcal{L}(\rhos, H_{\rms_i},Z)
    = \innerprod{W_\rmt}{\rho_\rmt}
    - \sum_{i=1}^{k} \innerprod{H_{\rms_i}}{\rho_{\rms_i}-\sigma_{\rms_i}} + \innerprod{Z}{\rhos},
\end{align*}
where the Lagrange multipliers $H_{\rms_i}$ are Hermitian and $Z \succeq 0$. For convenience, let 
\begin{equation}\label{Eq.Dfn.zeta}
	\zeta_\rms:=\sum_{i} H_{\rms_i}\otimes\mathbb{I}_{\rms\backslash \rms_i}-W_\rmt \otimes \mathbb{I}_{\rms \backslash \rmt}, 
\end{equation}	
the dual function~\cite{boyd2004convex} $g(H_{\rms_i},Z):= \sup_{\rhos } \mathcal{L}(\rhos, H_{\rms_i}, Z)$ is 
\begin{align}
    g(H_{\rms_i},Z) = \sup_{\rhos} 
    \innerprod{Z-\zeta_\rms}{\rhos} + \sum_{i=1}^{k} \innerprod{H_{\rms_i}}{\sigma_{\rms_i}}. 
\end{align}
Thus, unless  $Z =\zeta_\rms$, the dual function becomes unbounded, i.e., $g(H_{\rms_i},Z) = +\infty$ by choosing $\rhos$ to be an eigenstate of $Z-\zeta_\rms$ with non-vanishing eigenvalue and by making the norm of that eigenstate arbitrarily large.
Incorporating the non-negativity of $Z$ and eliminating it from the problem then gives the Lagrange dual problem
\begin{align}
\nonumber
& \min_{\{H_{\rms_i}\}_{S_{i} \in \mathcal{S}}}\qquad   \sum_{i=1}^{k} \tr\left(\sigma_{\rms_i} H_{\rms_i} \right) \\
& \text{subj. to } \quad H_{\rms_i} =H_{\rms_i}^\dag\,\, \forall\,\, \rms_i\in \mathcal{S} \text{ and } \zeta_\rms  \succeq 0.
\label{Eq.Dual.general}
\end{align}

\subsubsection{Manifestation of (meta)transitivity by \texorpdfstring{\cref{Eq.Dual.general}}{Eq.~(15)}}

Here it will be convenient to follow Proposition 1.19 on page 55 of Ref.~\cite{Watrous2018}. 
For this,  we will need to write the primal semidefinite program (SDP) in the form
\begin{gather}
\label{eq.watprimal}
\max_{X}  \innerprod{A}{X}   \,\,
\text{subj. to } \,\, \Phi(X) = B \text{ and } 
X \succeq 0,
\end{gather}
where $\innerprod{M_1}{M_2} = \tr(M_1^\dag M_2)$.
This means the dual SDP can be expressed as
\begin{gather}
\label{eq.watdual}
\min_{Y}  \innerprod{B}{Y}  \,\,
\text{subj. to } \,\, \Phi^\dag(Y) \succeq A \text{ and }  Y = Y^\dag.
\end{gather}
When strong duality holds, i.e., when the primal value $\innerprod{A}{X}$ coincides with the dual value $\innerprod{B}{Y}$ for some $X$ and $Y$, {\em complementary slackness} dictates that~\cite{Watrous2018} (see also Ref.~\cite{boyd2004convex})
\begin{equation}
\label{eq.compSlack}
    [\Phi^\dag(Y)- A]X = 0.
\end{equation}

It is straightforward to verify that Eq.~(1)
with linear witness given by \cref{eq.linearWitness} can be written in the form of \cref{eq.watprimal} by taking $A = W_\rmt\otimes \id_{\rms \backslash \rmt},X = \rho_\rms$ with
\begin{align*}
    B &= \bigoplus_{i=1}^k \sigma_{\rms_{i}}, &
    \Phi(X) &=\bigoplus_{i=1}^{k} \tr_{\rms \backslash \rms_{i}}(X).
\end{align*}
Similarly, \cref{Eq.Dual.general} is in the form of \cref{eq.watdual} by setting
\begin{align*}
    Y = \bigoplus_{i=1}^k H_{\rms_{i}},\ \ 
    \Phi^\dag(Y) =
    \sum_{i=1}^{k} H_{\rms_{i}} \otimes \id_{\rms \backslash \rms_{i}},\ \  A = W_\rmt\otimes \id_{\rms \backslash \rmt}
\end{align*}
where we used the fact that the adjoint channel of partial trace is tensoring by identity. Finally,
from \cref{eq.compSlack} we have that if strong duality holds then
\begin{equation}
    \left(  \sum_{i=1}^{k} H_{\rms_{i}}^\star \otimes \id_{\rms \backslash \rms_{i}} - W_\rmt\otimes \id_{\rms \backslash \rmt} \right) \rho_{\rms}^\star = \zeta_\rms^\star \rho_{\rms}^\star = 0 
\end{equation}
for the optimal joint state $\rho_\rms^\star$ and optimal dual variables $H_{\rms_i}^\star$. The last equality, in particular, implies that the pair $(\zeta_\rms^\star, \rho_{\rms}^\star)$ satisfies $\tr(\rho_{\rms}^\star\zeta_\rms^\star)=0$. Since $\zeta_{\rms} \succeq 0$, this last equality further implies that whenever strong duality holds, $\rho_{\rms}^\star$ must be a (mixture) of ground states of the Hamiltonian $\zeta_{\rms}$. 

Finally, when metatransitivity is certified by the witness $\W$, i.e., $\W(\rhot)<0$, the local interaction energy at $\rmt$ must satisfy
\begin{equation*}
	E_{\rmt} = -\tr(\rhos\,W_\rmt\otimes\id_{\rms\setminus\rmt}) = -\tr(\rhot\,W_\rmt)= -\W(\rhot)>0.
\end{equation*}

\subsection{SDPs for certifying entanglement (meta)transitivity via a violation of some separability criterion}\label{app.SeparabilityCriterion}

\subsubsection{The PPT separability criterion}

As mentioned in the main text, if we take 
in Eq.~(1) $W_\rmt = \eta_\rmt^\Gamma$
with $\eta_\rmt\succeq 0$, where $\Gamma$ denotes the partial transposition operation, and optimize over all such $\eta$, we end up with a witness that allows us to certify the entanglement transitivity via a violation of the PPT separability criterion. Such an optimization is, however, bilinear in $\rhos$ and $\eta$, and thus does not fit into the framework of a convex optimization problem.

To circumvent this problem, one can make use of the following optimization problem:
\begin{align}
\label{eq.transitivitySDP}
\nonumber
&\max_{\rhos}\qquad\qquad\qquad \lambda \\
&\text{subj. to }  \quad
 \tr_{\rms\backslash \rms_{i}}(\rhos) = \sigma_{\rms_{i}}\,\, \forall\,\, \rms_i\in \mathcal{S},\nonumber\\ 
 &\qquad\qquad\,\, \rhos \succeq 0, \quad\rhot^{\Gamma} \succeq  \lambda \mathbb{I},
\end{align}
which directly optimizes over the joint state $\rhos$ with marginals $\sigma_{\rms_i}$ such that the smallest eigenvalue $\lambda$ of $\rhot^{\Gamma}$ is maximized.  Since a bipartite state that is {\em not} PPT is entangled~\cite{peres1996,horodecki1996}, if the optimal $\lambda$ (denoted by $\lambda^\star$ throughout) is negative, the marginal state in $\rmt$ of all possible joint states $\rhos$  must be entangled. By following a calculation similar to the one given above, one can show that the Lagrange dual problem to~\cref{eq.transitivitySDP} takes exactly the same form as~\cref{Eq.Dual.general}, but with $W=\eta_\rmt^\Gamma$, and with $\eta_\rmt$ being an additional optimization variable subjected to the constraint of $\eta_\rmt\succeq 0$ and $\tr(\eta_\rmt)=1$.

\subsubsection{Some other means of certifying entanglement transitivity}

To certify the entanglement in $\rmt$, we may use different kinds of entanglement detection criteria. For example, if we employ the so-called ESIC criterion based on symmetric informationally complete positive operator-valued measures (SIC-POVMs)~\cite{Shang2018}, which is similar to the computable cross-norm or realignment (CCNR) criterion~\cite{Rudolph2005,ChenWu2003,Horodecki2006} except that each set of local orthogonal observables is replaced by a single SIC-POVM. Then for the target system $\rmt=(t,t')$ we can instead compute
\begin{align}
\label{eq.criterionSIC}
\nonumber
\min_{\rhos} \norm{\mathcal{P}^\rmt}_{1}, &\,
\text{subject to }  \,
\, \tr_{\rms\backslash \rms_{i}}(\rhos) = \sigma_{\rms_{i}} \forall \rms_{i}\in\mathcal{S}, \\
&\, \rhos \ge 0, \,
\mathcal{P}_{ij}^{\rmt} = \tr\left(\rhot E_{i}^{t}\otimes E_{j}^{t'}\right), \forall i,j,
\end{align}
where $\norm{M}_{1} :=\tr\sqrt{M^\dag M}$ is the trace norm (i.e., the sum of the singular values) of $M$, and the operators 
$E_{i}^{t} = \sqrt{\tfrac{d+1}{2d}} \proj{\psi_{i}^{t}}$ 
are constructed from the set $\{\ket{\psi_i^{t}}: i=1,\ldots,d^2\}$ whose projectors correspond to a SIC-POVM~\cite{Zauner2011,Renes2004,ScottGrassl2010,FuchsHoangStacey2017}. For this criterion, we can certify the entanglement in $\rmt$ when the optimal
$\norm{\mathcal{P}}_{1} > 1$, which is independent of the chosen $\{ E_{j}^{t} \}$ for each target subsystem~\cite{Shang2018}. 

\subsubsection{The PPT and CCNR criterion for genuine tripartite entanglement}
\label{App.GME.Criterion}

Here, we explain a simple criterion for detecting genuine tripartite entanglement introduced in~\cite{Li2017}.
To this end, we first briefly recall from~\cite{ChenWu2003} the realignment operation, which is based upon $\mathrm{vec}(\mathcal{M})$, the operation of rearranging the columns of the matrix $\mathcal{M}$ into a column vector (i.e., for standard basis vectors $\ket{i}$, $\mathrm{vec}(\ketbra{i}{j}) = \ket{j}\ket{i}$). 

Given a bipartite density operator $\rho_{\rab}$ acting on $\mathbb{C}^m\otimes \mathbb{C}^n$, we may write it as an $m\times m$ block matrix, 
\begin{equation}
    \rho_{AB} = 
    \begin{pmatrix}
    \rho^{(11)} &  \rho^{(12)}  & \cdots & \rho^{1m}\\
    \rho^{(21)} &  \rho^{(22)}  & \cdots & \rho^{2m} \\
    \vdots      &  \vdots       & \ddots & \vdots \\
    \rho^{(m1)} & \cdots        &  \cdots & \rho^{(mm)}
    \end{pmatrix},
\end{equation}
where each $\rho^{(ij)}$ is an $n\times n$ matrix. Then, we can construct a $m^2 \times n^2$ realigned matrix $\widetilde{\rho_{\rab}}$
\begin{equation}
    \widetilde{\rho_{\rab}} =
    \begin{pmatrix}
    \mathrm{vec}\left( \rho^{(11)} \right)^{\text{\tiny\rm T}} \\
    \mathrm{vec}\left( \rho^{(21)} \right)^{\text{\tiny\rm T}} \\
    \vdots \\
    \mathrm{vec}\left( \rho^{(mm)} \right)^{\text{\tiny\rm T}} \\
    \end{pmatrix}.
\end{equation}
In other words, the realigned matrix is obtained by turning the $m \times m$ blocks into rows. The CCNR criterion~\cite{Rudolph2005,ChenWu2003} dictates that for separable $\sigma_{\rab}$, $\lVert \widetilde{\sigma_{\rab}} \rVert_{1} \leq 1$.

Now, let ${\rm A|BC}$ denote a bipartition of a tripartite system $\rabc$ into a bipartite system with parts $\rm A$ and $\rbc$. Then, a biseparable state $\varrho_{\text{\tiny\rm ABC}}^{\rm bs.}$ is a convex mixture of states separable with respect to the different bipartitions, i.e., 
\begin{align}
    \varrho_{\text{\tiny\rm ABC}}^{\rm bs.} \!&=\! \sum_i\alpha_i \varrho^{(i)}_{\text{\tiny\ra}}\otimes\varrho^{(i)}_{\text{\tiny\rbc}}\!\! +\!\! \sum_j\beta_j \varrho^{(j)}_{\text{\tiny\rb}}\otimes\varrho^{(j)}_{\text{\tiny\rm CA}}\!\! +\!\! \sum_k\gamma_k \varrho^{(k)}_{\text{\tiny\rc}}\otimes\varrho^{(k)}_{\text{\tiny\rab}},\nonumber\\
   & \alpha_i,\beta_j,\gamma_k\ge 0,\quad \sum_i\alpha_i+\sum_j\beta_j+\sum_k\gamma_k=1,
   \label{Eq.bisep.}
\end{align}
where $\varrho^{(i)}_{\text{\tiny\ra}},\varrho^{(i)}_{\text{\tiny\rbc}},\varrho^{(j)}_{\text{\tiny\rb}},\varrho^{(j)}_{\text{\tiny\rm CA}},\varrho^{(k)}_{\text{\tiny\rc}},\varrho^{(k)}_{\text{\tiny\rab}}$ are normalized density matrices.

Furthermore, let $\rho_{\rabc}$ be a three-qudit density operator acting on $\mathbb{C}^d\otimes \mathbb{C}^d\otimes \mathbb{C}^d$ and  
\begin{align}
\nonumber
    M(\rho_{\rabc}) &= \tfrac{1}{3}\left( \lVert \rho_{\rabc}^{\text{\tiny T$_\ra$}} \rVert_1 +
    \lVert \rho_{\rabc}^{\text{\tiny T$_\rb$}} \rVert_1 + \lVert \rho_{\rabc}^{\text{\tiny T$_\rc$}} \rVert_1 \right) \\
    N(\rho_{\rabc}) &= \tfrac{1}{3}\left( \lVert \widetilde{\rho_{\rm A|BC}} \rVert_1 +
    \lVert \widetilde{\rho_{\rm B|CA}} \rVert_1 + \lVert \widetilde{\rho_{\rm C|AB}} \rVert_1 \right),
    \label{Eq.defMandN}
\end{align}
where $\text{\tiny T$_{X}$}$ means a partial transposition with respect to the subsystem ${\rm X}$.
In these notations, it was shown~\cite{Li2017} that for any biseparable $\rho_\rabc$, cf. \cref{Eq.bisep.}, we must have
 \begin{equation}
     \max\{M(\rho_{\rabc}), N(\rho_{\rabc}) \} \le \tfrac{1+2d}{3}.
 \end{equation}
 This means that if any of $M(\rho_{\rabc}), N(\rho_{\rabc})$ is larger than $\tfrac{1+2d}{3}$,
 $\rho_{\rabc}$ must be genuinely tripartite entangled.
 
 Therefore in the metatransitivity problem, we can use this, cf. \cref{eq.transitivitySDP} for the bipartite target system, for detecting genuine tripartite entanglement.
 This is done by minimizing $M$ and $N$ of the target marginal and taking the larger of the two minima. 
To this end, note that the minimization of the trace norm can be cast as an SDP~\cite{BenTalNemirovskii2001}. One approach is to recognize that the singular values of a matrix $M$ can be obtained from the nonzero eigenvalues of the symmetric matrix 
 $\Omega = \begin{pmatrix}
 0 & M \\ M^\dag & 0
 \end{pmatrix}$. More precisely, if $M$ has singular values $m_{i}$ then $\Omega$
 will have nonzero eigenvalues $\omega_i =\pm m_{i}$. This means that minimizing the trace norm of $M$ is equivalent to minimizing half of the $\ell_{1}$-norm
 $\lVert \vec{\omega} \rVert_{1} = \sum_i |\omega_i|$ of the vector of eigenvalues $\vec{\omega}$ of $\Omega$. This in turn can be solved by the SDP
 \begin{gather}
 \nonumber
     \min_{\Omega^+,\Omega^-} \tfrac{1}{2}(\tr(\Omega^+) + \tr(\Omega^-)), \,\,
     \text{subj. to } \,\,
     \Omega = \Omega^{+} - \Omega^{-}, \\
     \Omega^{+} \ge 0 \text{ and } \Omega^{-} \ge 0.
 \end{gather}



\section{A family of \texorpdfstring{$n$}{n}-qubit states exhibiting transitivity}\label{app.NqubitTransitivity}

For all integers $n\ge 3$, consider the $n$-qubit mixed state:
\begin{equation}
	\Omega_{n}(\gamma) = \gamma\proj{W_n} + (1-\gamma)\proj{0^n},\quad \gamma\in(0,1],
\end{equation}
which is a mixture of $\proj{0^n}$ and the $n$-qubit $W$ state. It is straightforward to verify that its two-qubit reduced states are:
\begin{equation}\label{eq.Rho_nGamma}
    \rho_{n}(\gamma) = \left(\tfrac{n-2\gamma}{n} \right)
    \proj{00} + \tfrac{2\gamma}{n}\proj{\Psi^+},
\end{equation}
where $\ket{\Psi^+}=\tfrac{1}{\sqrt{2}}(\ket{10} +\ket{01})$. In what follows, we show that for any $n$-vertex tree graph whose edges correspond to the bipartite marginals $\rho_n(\gamma)$, i.e., 
\begin{equation}\label{Eq.Wstate_marginal}
    \tr_{\rms\backslash \rms_i} (\rho) = \sigma_{\rms_i}=\rho_n(\gamma) \,\,\forall\,\, \rms_i \in \calS,
\end{equation}
 the global state $\rho$ compatible with these marginals in tree form is {\em unique} and hence given by $\Omega_{n}(\gamma)$. We begin by proving a lemma pertaining to the structure of the eigenstates of $\rho$.

\begin{lemma}
\label{lem.if_01_then_10}
Let $\rms$ be the global system, $\rms_i \in \calS$ be any two-qubit subsystem with marginal specified as $\rho_{n}(\gamma)$, and
\begin{equation}\label{Eq.Expand}
	\ket{\Psi_\ell} = \sum_{i_1,i_2,\cdots,i_n=0,1} \alpha^{(\ell)}_{i_1,i_2,\cdots,i_n}\ket{i_1i_2\cdots i_n},
\end{equation}
be an eigenstate of $\rho$ with nonzero eigenvalue, then \\
(i) all amplitudes $\alpha^{(\ell)}_{i_1,i_2,\cdots,i_n}$ with  two ``$1$" at the positions of $\rms_i$ vanish;\\
(ii) the amplitudes $\alpha^{(\ell)}_{i_1,i_2,\cdots,i_n}$ with one ``$1$" and one ``$0$" at the positions of $\rms_i$ are identical.\\
\end{lemma}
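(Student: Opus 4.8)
The plan is to exploit the elementary support-containment property of marginals: the range of any density operator is contained in the tensor product of the range of a marginal with the complementary factor. First I would fix an edge $\rms_i$ consisting of the two qubits at positions $a$ and $b$, with marginal $\sigma_{\rms_i}=\rho_{n}(\gamma)$, and let $P$ be the projector onto the orthogonal complement of $\mathrm{supp}[\rho_{n}(\gamma)]$ inside $\mathbb{C}^2\otimes\mathbb{C}^2$. Since $\tr[P\,\rho_{n}(\gamma)]=0$ and the adjoint of the partial trace is tensoring with the identity, we get $\tr[(P\otimes\id_{\rms\setminus\rms_i})\rho]=0$. Because both $\rho\succeq0$ and $P\otimes\id\succeq0$, this trace condition forces $(P\otimes\id_{\rms\setminus\rms_i})\rho=0$, i.e.
\begin{equation}
\mathrm{supp}(\rho)\subseteq \mathrm{supp}[\rho_{n}(\gamma)]\otimes\mathcal{H}_{\rms\setminus\rms_i}.
\end{equation}
In particular, any eigenstate $\ket{\Psi_\ell}$ of $\rho$ with nonzero eigenvalue lies in $\mathrm{supp}(\rho)$ and is therefore annihilated by $P\otimes\id_{\rms\setminus\rms_i}$.

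Next I would make the complement explicit. For $\gamma\in(0,1]$ and $n\ge 3$ both coefficients $\tfrac{n-2\gamma}{n}$ and $\tfrac{2\gamma}{n}$ in $\rho_{n}(\gamma)$ are strictly positive, so $\mathrm{supp}[\rho_{n}(\gamma)]=\mathrm{span}\{\ket{00},\ket{\Psi^+}\}$ is two-dimensional and its orthogonal complement in $\mathbb{C}^2\otimes\mathbb{C}^2$ is $\mathrm{span}\{\ket{11},\ket{\Psi^-}\}$, where $\ket{\Psi^-}=\tfrac{1}{\sqrt2}(\ket{10}-\ket{01})$. Writing $\ket{\Psi_\ell}=\sum_{x,y\in\{0,1\}}\ket{xy}_{ab}\otimes\ket{\chi_{xy}}$, the vector $\ket{\chi_{xy}}$ on the remaining $n-2$ qubits simply collects the amplitudes $\alpha^{(\ell)}_{i_1,\cdots,i_n}$ having $i_a=x$ and $i_b=y$ at the positions of $\rms_i$. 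The two orthogonality conditions $(\bra{11}_{ab}\otimes\id)\ket{\Psi_\ell}=0$ and $(\bra{\Psi^-}_{ab}\otimes\id)\ket{\Psi_\ell}=0$ then reduce to $\ket{\chi_{11}}=0$ and $\tfrac{1}{\sqrt2}(\ket{\chi_{10}}-\ket{\chi_{01}})=0$, respectively.

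The first identity is exactly claim (i): every amplitude carrying two ``$1$''s at the positions of $\rms_i$ vanishes. The second gives $\ket{\chi_{10}}=\ket{\chi_{01}}$, i.e., any two amplitudes that differ only by exchanging a ``$1$'' and a ``$0$'' across the pair $\rms_i$ coincide, which is claim (ii). I expect the only genuinely non-routine ingredient to be the support-containment step; everything else is index bookkeeping once the complement is identified. One subtlety worth flagging is that the strict positivity of $\gamma$ is precisely what places $\ket{\Psi^+}$ (and not merely $\ket{00}$) in the support: without it the complement would be three-dimensional and claim (ii) would lose its content.
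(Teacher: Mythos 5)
Your proposal is correct and is essentially the paper's own argument in slightly more abstract packaging: the paper expands $\rho$ in its spectral decomposition and shows directly that $\bra{11}\rho_n(\gamma)\ket{11}=0$ and $(\bra{01}-\bra{10})\rho_n(\gamma)(\ket{01}-\ket{10})=0$ force, via convexity of the resulting non-negative sums, exactly the two orthogonality conditions $(\bra{11}\otimes\id)\ket{\Psi_\ell}=0$ and $(\bra{\Psi^-}\otimes\id)\ket{\Psi_\ell}=0$ that you derive. Your support-containment lemma (with the complement $\mathrm{span}\{\ket{11},\ket{\Psi^-}\}$, whose identification correctly uses $\gamma>0$ and $n\ge3$) merely unifies the paper's two in-line computations into a single standard fact, so the proofs coincide in substance.
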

\begin{proof}
Let us write the global state $\rho$ in its spectral decomposition:
\begin{equation}\label{Eq:Spectral_Decomposition_of_state}
    \rho = \sum_\ell c_\ell \proj{\Psi_\ell},
\end{equation}
where $\braket{\Psi_i}{\Psi_j}=0~\forall\,\, i\neq j$, $\sum_i c_i = 1$, and $c_i >0~\forall\,\, i$ are the nonzero eigenvalues of $\rho$.

Without loss of generality, let $\rms_i$ be the first two qubits (otherwise, reorder the particles to make them so), then 
\begin{align*}
	0 = &\,\,\bra{11}\rho_n(\gamma)\ket{11}= \bra{11}\tr_{\rms\setminus\rms_i}(\rho)\ket{11}\\
	= &\sum_\ell c_\ell \bra{11}\tr_{\rms\setminus\rms_i}(\proj{\Psi_\ell})\ket{11}\\
	= &\sum_\ell c_\ell\, \tr_{\rms\setminus\rms_i}\left[\left(\bra{11}_{\rms_i}\otimes\id_{\rms\setminus\rms_i}\right)\proj{\Psi_\ell}\left(\ket{11}_{\rms_i}\otimes\id_{\rms\setminus\rms_i}\right)\right]\\
	= &\sum_\ell c_\ell |\alpha^{(\ell)}_{1,1,i_3,\cdots,i_n}|^2
\end{align*}
where the first equality follows from \cref{eq.Rho_nGamma}, second equality follows from \cref{Eq.Wstate_marginal}, third equality follows from \cref{Eq:Spectral_Decomposition_of_state}, and the last equality follows from \cref{Eq.Expand}.
Since the last expression is a convex sum of non-negative terms, the fact that the sum vanishes means that each $\alpha^{(\ell)}_{1,1,i_3,\cdots,i_n}$ is zero 
for all $\ell,i_3,i_4,\cdots,i_n$ and $\rms_i$ as claimed. 

For the proof of (ii), similar steps with $\ket{01}-\ket{10}$ playing the role of $\ket{11}$ lead to:
\begin{align*}
	0  = &\sum_\ell c_\ell\, \tr_{\rms\setminus\rms_i}\bigg\{\left[(\bra{01}-\bra{10})_{\rms_i}\otimes\id_{\rms\setminus\rms_i}\right]\proj{\Psi_\ell}\times\\
	&\qquad\qquad\qquad \left[(\ket{01}-\ket{10})_{\rms_i}\otimes\id_{\rms\setminus\rms_i}\right]\bigg\}\\
	= &\sum_\ell c_\ell\, |\alpha^{(\ell)}_{0,1,i_3,\cdots,i_n}-\alpha^{(\ell)}_{1,0,i_3,\cdots,i_n}|^2
\end{align*}
This means that $\alpha^{(\ell)}_{0,1,i_3,\cdots,i_n}=\alpha^{(\ell)}_{1,0,i_3,\cdots,i_n}$ for all $\ell,i_3,i_4,\cdots,i_n$ and $\rms_i$. 
Hence, in the expansion of \cref{Eq.Expand}, if there is a term $\ket{i_1\cdots 01 \cdots i_n}$ where the ``$01$" appear at positions corresponding to an $\rms_i$, there must also be a term $\ket{i_1\cdots 10 \cdots i_n}$ with exactly the same amplitude. 
\end{proof}

\begin{theorem}\label{Thm:uniqueness}
For any tree graph with $n$ vertices that satisfies ~\cref{Eq.Wstate_marginal}, $\Omega_{n}(\gamma) = \gamma\proj{W_n} + (1-\gamma)\proj{0^n}$ is the unique global state and all the two-qubit reduced states are $\rho_n(\gamma)$.
\end{theorem}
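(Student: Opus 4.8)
The plan is to leverage \cref{lem.if_01_then_10} to show that every eigenstate of $\rho$ with nonzero eigenvalue lies in the two-dimensional subspace $\mathcal{V}=\mathrm{span}\{\ket{0^n},\ket{W_n}\}$, so that $\rho$ is supported on $\mathcal{V}$, and then to pin down $\rho$ uniquely within $\mathcal{V}$ by matching a single two-qubit marginal to $\rho_n(\gamma)$. Throughout I fix an eigenstate $\ket{\Psi_\ell}$ with $c_\ell>0$ and argue about its amplitudes $\alpha^{(\ell)}_{i_1\cdots i_n}$, sorted by Hamming weight (number of ``$1$''s).

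First I would argue that every amplitude attached to a basis string of Hamming weight at least two vanishes, via a ``collision'' argument along the tree. Given an occupied string with position set $S$, pick two occupied vertices $a,b$ whose connecting path $a=v_0,v_1,\dots,v_k=b$ in the tree is shortest among all occupied pairs; minimality guarantees that the interior vertices $v_1,\dots,v_{k-1}$ are all unoccupied. If $k=1$ then $(a,b)$ is an edge carrying two ``$1$''s and part (i) of \cref{lem.if_01_then_10} kills the amplitude immediately. If $k\ge 2$, I repeatedly invoke part (ii) to ``hop'' the leading ``$1$'' from $v_i$ to $v_{i+1}$: at each step the edge $(v_i,v_{i+1})$ carries a ``$1$'' at $v_i$ and a ``$0$'' at $v_{i+1}$ (the latter because the interior is unoccupied and the remaining occupied vertices are untouched), so the amplitude is unchanged. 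After $k-1$ hops the configuration places ``$1$''s on both endpoints of the edge $(v_{k-1},b)$, whereupon part (i) forces the amplitude to be zero; since every hop preserved the amplitude, the original amplitude vanishes as well.

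Next, for the weight-one strings I would use part (ii) together with connectedness of the tree: for every edge $(a,b)$ the single-excitation amplitudes satisfy $\alpha^{(\ell)}_{1_a}=\alpha^{(\ell)}_{1_b}$, and since any two vertices are joined by a path of edges, all $n$ weight-one amplitudes coincide. Hence $\ket{\Psi_\ell}$ is a combination of $\ket{0^n}$ and $\sum_j\ket{1_j}=\sqrt{n}\,\ket{W_n}$, i.e. $\ket{\Psi_\ell}\in\mathcal{V}$; as this holds for every eigenstate in the support, $\rho=P_\mathcal{V}\,\rho\,P_\mathcal{V}$. Writing a generic $\rho$ on $\mathcal{V}$ as $a\proj{0^n}+b\proj{W_n}+c\,\ketbra{0^n}{W_n}+\bar c\,\ketbra{W_n}{0^n}$ with $a+b=1$, I would compute the reduced state on one edge. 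Using $\tr_{\rms\setminus\rms_i}\proj{W_n}=\tfrac{n-2}{n}\proj{00}+\tfrac{2}{n}\proj{\Psi^+}$ and $\tr_{\rms\setminus\rms_i}\ketbra{0^n}{W_n}=\tfrac{\sqrt2}{\sqrt n}\ketbra{00}{\Psi^+}$, matching the result to $\rho_n(\gamma)$ fixes the parameters uniquely: the $\proj{\Psi^+}$ coefficient gives $b=\gamma$, the $\ketbra{00}{\Psi^+}$ coherence forces $c=0$, and normalization gives $a=1-\gamma$ (the $\proj{00}$ coefficient then matching automatically). Thus $\rho=\Omega_n(\gamma)$, and because $\Omega_n(\gamma)$ is permutation symmetric, every two-qubit marginal---not merely those on edges---equals $\rho_n(\gamma)$, which also confirms consistency of the single-edge matching across all edges.

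I expect the main obstacle to be the weight-$\ge2$ collision argument: one must choose the occupied pair carefully (shortest occupied path) so that every intermediate hop is unobstructed, and verify that the auxiliary occupied vertices never interfere with the edge used at each step. Everything else is either the already-established lemma, a connectedness argument, or a short linear-algebra computation.
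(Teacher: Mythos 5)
Your proposal is correct and follows essentially the same route as the paper's proof: it uses \cref{lem.if_01_then_10} to transport a ``1'' along a tree path until it collides with another ``1'' (eliminating all amplitudes of Hamming weight $\ge 2$), concludes that every eigenstate lies in $\mathrm{span}\{\ket{0^n},\ket{W_n}\}$, and then fixes the remaining parameters by matching a single edge marginal to $\rho_n(\gamma)$ --- your coherence condition $c=0$ is exactly the paper's constraint $\sum_\ell c_\ell\,\beta_0^{(\ell)}\beta_1^{(\ell)*}=0$ derived from $\bra{00}\rho_n(\gamma)\ket{\Psi^+}=0$. If anything, your shortest-occupied-pair collision argument treats the general tree slightly more explicitly than the paper, which proves the linear-chain case and then appeals to the uniqueness of paths in a tree.
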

\begin{proof}
For convenience, we define $\ket{m_1,m_2,\cdots,m_l}_n$ as the $n$-qubit state with a ``1" at positions $m_1,m_2,\cdots,m_l$ and $0$ elsewhere.
We first start with a linear chain, and suppose it has $n$ nodes and all of the $(n-1)$ edges are $\rho_n(\gamma)$. 
By ~\cref{lem.if_01_then_10}, we know that if any of the eigenstates $\ket{\Psi_\ell}$ has a contribution from $\ket{m_1,m_2,\cdots,m_l}_n$ (where $m_1<m_2<\cdots m_l$), there must also be an equal-amplitude contribution from both $\ket{m_1-1,m_2,\cdots,m_l}_n$ and $\ket{m_1+1,m_2,\cdots,m_l}_n$.  Repeating this argument iteratively eventually leads to the conclusion that there must also be a contribution from the term $\ket{m_2-1,m_2,\cdots,m_l}_n$ in $\ket{\Psi_\ell}$, which contradicts the  part (i) of \cref{lem.if_01_then_10}.
This means that each $\ket{\Psi_\ell}$ must lie in the span of $\ket{0^n}$ and $\{\ket{i}_k\}_{i=1,\cdots,n}$ and by part (ii) of \cref{lem.if_01_then_10}, all $\{\ket{i}_n\}_{i=1,\cdots,n}$ must  occur at the same time with the same amplitude, thereby giving
\begin{equation}\label{Eq.eigenstate}
	\ket{\Psi_\ell} = \beta^{(\ell)}_0\ket{0^n} + \beta^{(\ell)}_1\ket{W_n},\quad |\beta_0|^2+|\beta_1|^2=1.
\end{equation}

Again, imagine that $\rms_i$ being the first two qubits, then
\begin{align*}
	0 = &\,\bra{00}\rho_n(\gamma)\ket{\Psi^+} = \bra{00}\tr_{\rms\backslash\rms_i}(\rho)\ket{\Psi^+}\\
	= &\, \tr_{\rms\backslash\rms_i}\left[\left(\bra{00}_{\rms_i}\otimes\id_{\rms\setminus\rms_i}\right)\rho\left(\ket{\Psi^+}_{\rms_i}\otimes\id_{\rms\setminus\rms_i}\right)\right] \\
	= &\sum_\ell c_\ell\, \tr_{\rms\backslash\rms_i}\left[\left(\bra{00}_{\rms_i}\otimes\id_{\rms\setminus\rms_i}\ket{\Psi_\ell}\right)\left(\braket{\Psi_\ell}{\Psi^+}_{\rms_i}\otimes\id_{\rms\setminus\rms_i}\right)\right]\\
	= &\sum_\ell c_\ell\,\beta^{(\ell)}_0\sqrt{\tfrac{2}{n}}\beta_1^{(\ell)*}
\end{align*}
Hence, we have the constraint:
\begin{equation}\label{Eq.constraint}
	\sum_\ell c_\ell \beta^{(\ell)}_0\beta_1^{(\ell)*}=\sum_\ell c_\ell \beta^{(\ell)*}_0\beta_1^{(\ell)}=0
\end{equation}
Consequently, we see from \cref{Eq:Spectral_Decomposition_of_state} and \cref{Eq.eigenstate}, and \cref{Eq.constraint} that the global state is:
\begin{align}
	\rho = &\sum_\ell c_\ell \Big[ |\beta^{(\ell)}_0|^2\proj{0^n} + |\beta^{(\ell)}_1|^2\proj{W_n} \nonumber\\
	&\qquad + \beta^{(\ell)}_0\beta_1^{(\ell)*}\ket{0^n}\!\bra{W_n} + \beta^{(\ell)*}_1\beta_0^{(\ell)}\ket{W_n}\!\bra{0^n} \Big]\nonumber\\
	= & \sum_\ell c_\ell \left[ |\beta^{(\ell)}_0|^2\proj{0^n} + |\beta^{(\ell)}_1|^2\proj{W_n}\right],
\end{align}
which is a convex mixture of $\proj{0^n}$ and $\proj{W_n}$. Finally, using \cref{Eq.Wstate_marginal} and equating the two-qubit reduced states of $\rho$ with that required in \cref{eq.Rho_nGamma} immediately lead to:
\begin{equation*}
	 \sum_\ell c_\ell |\beta^{(\ell)}_0|^2 = 1-\gamma, \quad \sum_\ell c_\ell |\beta^{(\ell)}_1|^2 = \gamma,\quad \gamma\in [0,1].
\end{equation*}
Hence, the global state is necessarily
\begin{equation}
    \rho = \Omega_{n}(\gamma) = \gamma\proj{W_n} +(1-\gamma)\proj{0^n}.
\end{equation}
The above argument also holds for any $n$-node tree graph with all its $n-1$ edges set to $\rho_n(\gamma)$. To see this, it suffices to note that in a tree graph, there is always a unique path (chain) connecting any two nodes. We can then apply the above arguments for a chain to each of these paths to complete the analysis. As $\rho$ is clearly invariant under an arbitrary permutation of the $n$ subsystems, all its two-qubit reduced states are $\rho_n(\gamma)$. In particular, if $\rmt\not\in\calS$ is a two-qubit marginal, we must also have $\rhot=\rho_n(\gamma)$.
\end{proof}
Note that our \cref{Thm:uniqueness} generalizes the uniqueness result of \cite{Parashar.PRA.2009,wu2014} where the global state is the $n$-qubit $W$-state $\ket{W_n}$.

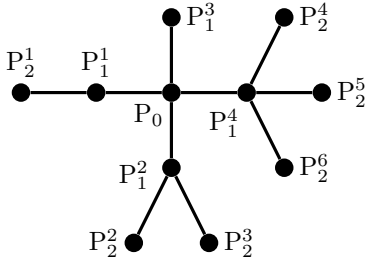
\begin{figure}[ht]
    \centering
\begin{tikzpicture}
\begin{scope}[every node/.style={circle,fill,inner sep=0pt, minimum size = 1pt, scale = 0.7}]
    \node (A) at (0,3) {O};
    \node (B) at (1,3) {O};
    \node (C) at (2,3) {O};
    \node (D) at (3,3) {O};
    \node (E) at (4,3) {O};
    \node (F) at (2,4) {O};
    \node (G) at (2,2) {O};
    \node (H) at (2.5,1) {O};
    \node (I) at (1.5,1) {O};
    \node (J) at (3.5,4) {O};
    \node (K) at (3.5,2) {O};
\end{scope}

\begin{scope}
    \node at (1.7, 2.7) {$\mathrm{P}_0$};
    \node at (1, 3.4) {$\mathrm{P}_1^1$};
    \node at (1.5, 1.9) {$\mathrm{P}_1^2$};
    \node at (2.4, 4) {$\mathrm{P}_1^3$};
    \node at (2.7, 2.6) {$\mathrm{P}_1^4$};
    \node at (0, 3.4) {$\mathrm{P}_2^1$};
    \node at (1.1, 1) {$\mathrm{P}_2^2$};
    \node at (2.9, 1) {$\mathrm{P}_2^3$};
    \node at (3.9, 4) {$\mathrm{P}_2^4$};
    \node at (4.4, 3) {$\mathrm{P}_2^5$};
    \node at (3.9, 2) {$\mathrm{P}_2^6$};
\end{scope}

\begin{scope}[every node/.style={fill=white,circle},
              every edge/.style={draw=black,very thick}]
    \path [-] (A) edge (B);
    \path [-] (B) edge (C);
    \path [-] (C) edge (D);
    \path [-] (D) edge (E);
    \path [-] (C) edge (F);
    \path [-] (C) edge (G);
    \path [-] (G) edge (H);
    \path [-] (G) edge (I);
    \path [-] (D) edge (J);
    \path [-] (D) edge (K);
\end{scope}
\end{tikzpicture}
    \caption{To see how the proof above applies to any tree graph, suppose we start from node ${\rm P}_0$ in this example. Then we build up the possible eigenstate by applying Lemma~\ref{lem.if_01_then_10} to all nodes ${\rm P}_{k}^{j}$ that are distance $k$ away from ${\rm P}_{0}$. Because there is a unique path between ${\rm P}_0$ and any other node in the tree graph, this leads to the same conclusion as a linear chain.}
    \label{fig.proofUniqueMixedWTree}
\end{figure}

\section{Finding the (meta)transitivity region of overlapping Werner states}\label{app.RegionWerner}

Consider a qudit tripartite system $\rabc$ for $d \ge 3$.
Ref.~\cite{johnsonviola2013} describes the conditions for three Werner states in $\rab$,  $\rac$, and $\rbc$ to be compatible. In Ref.~\cite{johnsonviola2013}, they parameterize the Werner state according to
\begin{equation}
    W_{d}(\psi^-) = \tfrac{d}{d^2-1} \left[ (d-\psi^-)\tfrac{1}{d^2}\mathbb{I} + (\psi^- - \tfrac{1}{d}) \tfrac{1}{d}V \right]
\end{equation}
where $V$ is the swap operator $V\ket{\alpha}\ket{\beta} = \ket{\beta}\ket{\alpha}$ and 
\begin{equation}\label{eq:WernerPsiMinus}
	\psi^- = \tr[VW_{d}(\psi^-)].
\end{equation}

Ref.~\cite{johnsonviola2013} showed that three qudit Werner states $\psi^{-}_{\rab}, \psi^{-}_{\rac},\psi^{-}_{\rbc}$ are compatible if and only if
the point $(\psi^{-}_{\rab},\psi^{-}_{\rbc},\psi^{-}_{\rac})$
lies within the bicone described by
\begin{equation}
\label{eq.WernerBicone}
    1 \pm \psi^{-}_\mathrm{ave} \ge \tfrac{2}{3} \lvert \psi^{-}_{\rbc} + \omega\psi^{-}_{\rac} + \omega^2 \psi^{-}_{\rab}\rvert,
\end{equation}
where $\omega = \exp(\tfrac{2\pi i}{3})$ and
\begin{equation}
    \psi^{-}_\mathrm{ave} = \tfrac{1}{3}(\psi^{-}_{\rab} +\psi^{-}_{\rac}+\psi^{-}_{\rbc}).
\end{equation}

In terms of the parameter $v$ in 
Eq.~(3), we have $\psi^{-} = 2v-1$, so the compatibility conditions become
\begin{align}
\label{eq.compatibleWerner}
    \nonumber
    \tfrac{2}{3}(v_{\rab} + v_{\rac} + v_{\rbc}) \ge \mathcal{F} \\
2 - \tfrac{2}{3}(v_{\rab} + v_{\rac} + v_{\rbc}) \ge \mathcal{F}
\end{align}
where
\begin{equation}
    \mathcal{F} :=
\tfrac{2}{3}\sqrt{3(v_{\rac} - v_{\rab})^2 + (2v_{\rbc} - v_{\rab} - v_{\rac})^2}
\end{equation}

To find the metatransitivity region, we need to find the range of compatible $v_{\rbc}$ when given $v_{\rab}$ and $v_{\rac}$ and solve for when the boundary $v_{\rbc} =\tfrac{1}{2}$.

For the first inequality in Eq.~(\ref{eq.compatibleWerner}), if we square both sides and simplify, we obtain
\begin{gather}
    v_{\rbc}^2 - 2v_{\rbc}(v_{\rab} + v_{\rac}) + (v_{\rab}-v_{\rab})^2 \le 0.
\end{gather}
Next we complete the square for $v_{\rbc}$ to get
\begin{equation}\label{Eq.Werner.Parabola1}
    [v_{\rbc} - (v_{\rab} + v_{\rac})]^2 \le 4v_{\rab} v_{\rac}.
\end{equation}
The desired boundary is given by taking the equality and substituting $v_{\rbc} = \tfrac{1}{2}$.

Similarly, for the second inequality in Eq.~(\ref{eq.compatibleWerner}), if we square both sides and simplify, we obtain
\begin{gather}
    \nonumber
    v_{\rbc}^2 + 2v_{\rbc} - 2v_{\rbc}(v_{\rab}+v_{\rac})
    + (v_{\rab}-v_{\rac})^2 \\
    \le 3 - 2(v_{\rab}+v_{\rac}).
\end{gather}
This time we complete the square for $(v_{\rbc}+1)$ to get
\begin{gather}\label{Eq.Werner.Parabola2}
    \left[ (v_{\rbc} + 1) - (v_{\rab} + v_{\rac}) \right]^2
    \le 4(1-v_{\rab})(1-v_{\rac}).
\end{gather}
The desired boundary is given by taking the equality and substituting $v_{\rbc} = \tfrac{1}{2}$.

\begin{figure}[tbp]
    \centering
    \includegraphics[width=0.65\linewidth]{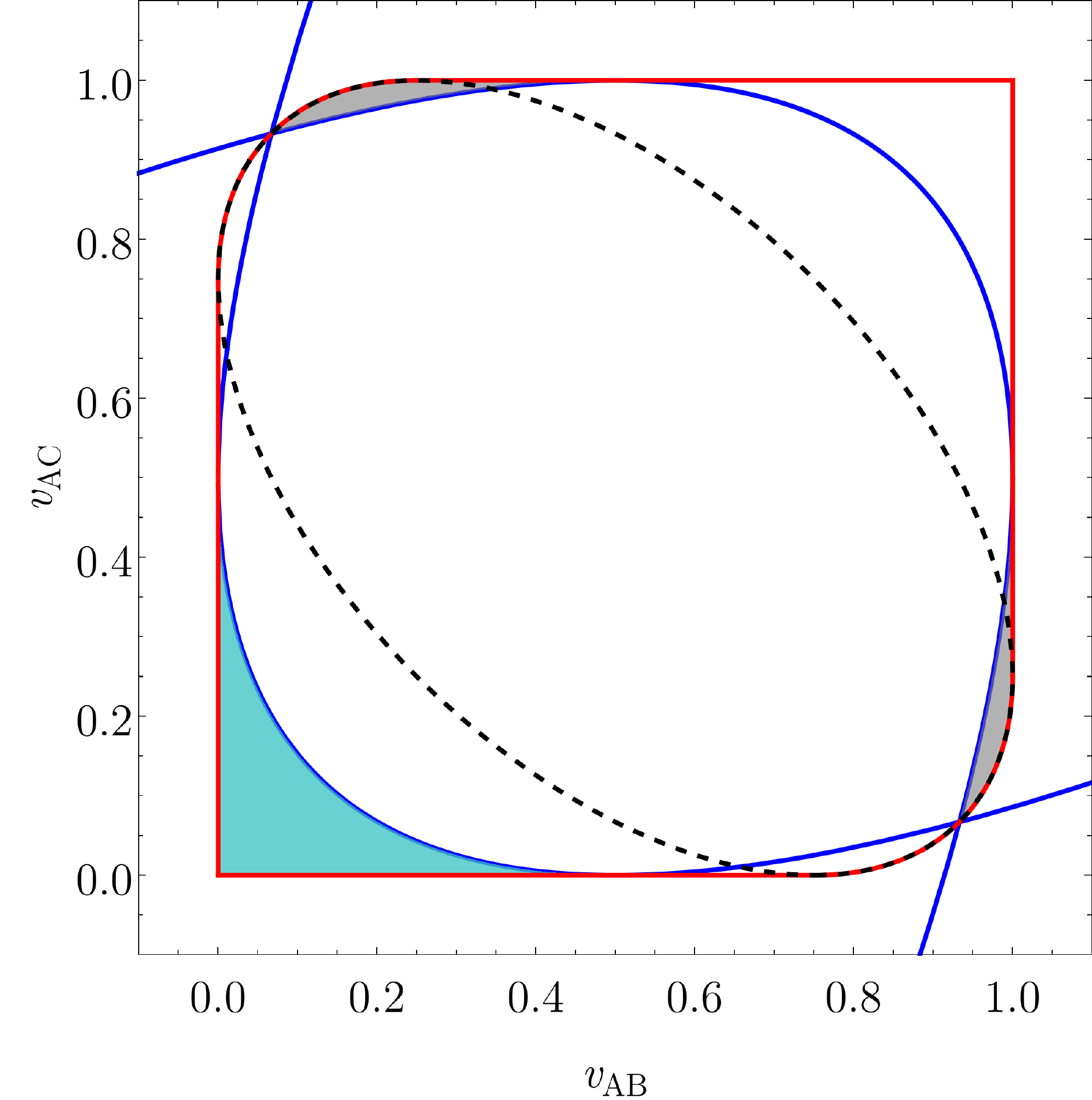}
    \caption{Parameter space for a pair of Werner state marginals with, respectively, weight $v_{\rab}$ and $v_{\rac}$ on the symmetric subspace. 
    The plot here shows the parabolas of \cref{Eq.Werner.Parabola1}  and \cref{Eq.Werner.Parabola2} as well as the ellipse of \cref{eq.WernerEllipse}. 
    Fig.~2 is a simplified version of the current plot.
    }
    \label{fig:MetatransitivityQuditWerner-detail}
\end{figure}

Ref.~\cite{johnsonviola2013} specifies the compatible region for a pair of Werner states obtained from projecting the bicone onto a plane. This compatible region is given by $\psi^{-}_{\rab},\psi^{-}_{\rac} \ge -\tfrac{1}{2}$
or $\psi^{-}_{\rab},\psi^{-}_{\rac} \le \tfrac{1}{2}$, or
the pair satisifies
\begin{equation}
\label{eq.WernerPsiMinusEllipse}
    (\psi^{-}_{\rab} + \psi^{-}_{\rac})^2 +\tfrac{1}{3}(\psi^{-}_{\rab} - \psi^{-}_{\rac})^2 \le 1.
\end{equation}
In our parameters,  this translates to the convex hull of the points $(0,0),(1,1)$ and all the points contained in the ellipse 
\begin{equation}
\label{eq.WernerEllipse}
    (v_{\rab} + v_{\rac}-1)^2 + \tfrac{1}{3}(v_{\rab} - v_{\rac})^2 = \tfrac{1}{4}.
\end{equation}
Finally we find that the parabolas will divide the compatible region into seven areas. It is enough to check if a point inside each area to determine if the area exhibits metatransitivity.

For $d=2$, only the cone given by the minus sign in Eq.~(\ref{eq.WernerBicone}) is compatible. This leads to a compatible region for $(\psi^{-}_{\rab},\psi^{-}_{\rac})$ that is given by
$\psi^{-}_{\rab},\psi^{-}_{\rac} \ge -\tfrac{1}{2}$ or Eq.~(\ref{eq.WernerPsiMinusEllipse}). This translates to the convex hull of $(1,1)$ and the ellipse of Eq.~(\ref{eq.WernerEllipse}). To understand why this happens, observe that the projection onto the qubit antisymmetric subspace corresponds to the maximally entangled singlet state $\tfrac{1}{\sqrt{2}}(\ket{01}-\ket{10})$, so for small values of $v_{\rab}$ and $v_{\rac}$, monogamy of entanglement prohibits them from being compatible.

\section{Finding the metatransitivity region of overlapping isotropic states}\label{app.RegionIsotropic}

Consider a qudit tripartite system $\rabc$ for $d \ge 3$.
Ref.~\cite{johnsonviola2013} describes the conditions for two isotropic states in $\rab$ and  $\rac$, and $\rbc$ to be compatible. In Ref.~\cite{johnsonviola2013}, they parameterize the isotropic state according to
\begin{equation}
    \mathcal{I}_{d}(\phi^{+}) = \tfrac{d}{d^2-1} \left[ (d-\phi^+)\tfrac{1}{d^2}\mathbb{I} + (\phi^+ - \tfrac{1}{d}) \proj{\Phi_d} \right]
\end{equation}
where $\ket{\Phi_d} = \tfrac{1}{\sqrt{d}}\sum_{i}\ket{i,i}$ and
$d\phi^{+} = \bra{\Phi_d}\mathcal{I}_d(p)\ket{\Phi_d}$ is, up to a constant of $d$, the fully entangled fraction of $\mathcal{I}_{d}(\phi^{+})$. Meanwhile the Werner state in $\rbc$ is written in terms of $\psi^-$ in Eq.~(\ref{eq:WernerPsiMinus}).
Ref.~\cite{johnsonviola2013} showed that for $d \ge 3$ the $\phi^{+}_{\rab}, \phi^{+}_{\rac}$ and $\psi^{-}_{\rbc}$ are compatible if the point $(\phi^{+}_{\rab},\phi^{+}_{\rac},\psi^{-}_{\rbc})$ lies within the convex hull of $(0,0,-1)$ and the cone given by
\begin{align}
\nonumber
    \phi^{+}_{\rab} &+ \phi^{+}_{\rac} - \psi^{-}_{\rbc} \le d, \\
    \nonumber
    1 &+ \phi^{+}_{\rab} + \phi^{+}_{\rac} - \psi^{-}_{\rbc}  \\
    \nonumber
    &\ge \left\lvert d(\psi^{-}_{\rbc} - 1) +\sqrt{\tfrac{2d}{d-1}}(e^{i\theta}\phi^{+}_{\rab} + e^{-i\theta}\phi^{+}_{\rac}) \right\rvert, \\
    e^{\pm i\theta} &= \pm i\sqrt{\tfrac{d+1}{2d}} + \sqrt{\tfrac{d-1}{2d}}. 
\end{align}

In terms of the fully entangled fraction $p = \tfrac{1}{d}\phi^{+}$ for the isotropic states and $v = \tfrac{1}{2}(\psi^{-}+1)$ for the Werner states, the compatibility conditions become
\begin{subequations}
\begin{align}
\label{Eq.LinConstr}
    p_{\rab} &+ p_{\rac} - \tfrac{1}{d}(2v_{\rbc} - 1) \le 1, \\
    \nonumber
    2 &+ d(p_{\rab} + p_{\rac}) - 2 v_{\rbc} 
    \ge \sqrt{\mathcal{R}_1 + \mathcal{R}_2}, \\
    \nonumber
 \mathcal{R}_1 &= [d (p_{\rab}+p_{\rac})+d (2 v_{\rbc}-2)]^2, \\
 \mathcal{R}_2 &= \tfrac{d^2 (d+1)}{d-1} \left(p_{\rab}-p_{\rac}\right)^2.
 \label{Eq.NonlinearConstr}
\end{align}
\end{subequations}

Similar to what we did for the Werner states, we want to solve for the condition on $p_{\rab}$ and $p_{\rac}$ such that $v_{\rbc} = \frac{1}{2}$ is on the boundary of the compatible Werner states.
Let $\mathcal{V} = 2v_{\rbc}-2$ and 
$\mathcal{P} = d(p_{\rab} + p_{\rac})$.
Taking \cref{Eq.NonlinearConstr} and squaring both sides, we obtain
\begin{equation}
    (\mathcal{V} - \mathcal{P})^2 \ge (d\mathcal{V} + \mathcal{P})^2
    + \tfrac{d+1}{d-1}\mathcal{P}^2 - \left(\tfrac{d+1}{d-1} \right)4d^2 p_{\rab}p_{\rac}.
\end{equation}
After some algebra this can be simplified into
\begin{equation}
    [\mathcal{P} + (d-1)\mathcal{V}]^2 \le 4d^2 p_{\rab}p_{\rac}.
\end{equation}
The desired boundary is obtained by taking the equality and setting $v_{\rbc} = \tfrac{1}{2}$, which implies $\mathcal{V} = -1$ and leads to the parabola
\begin{equation}
\label{eq.IsotropicMetatransitivityParabola}
p_{\rab} p_{\rac}=\left[\frac{d(p_{\rab}+p_{\rac})-(d-1)}{2 d}\right]^2.
\end{equation}

\begin{figure}[t]
    \centering
    \includegraphics[width=0.65\linewidth]{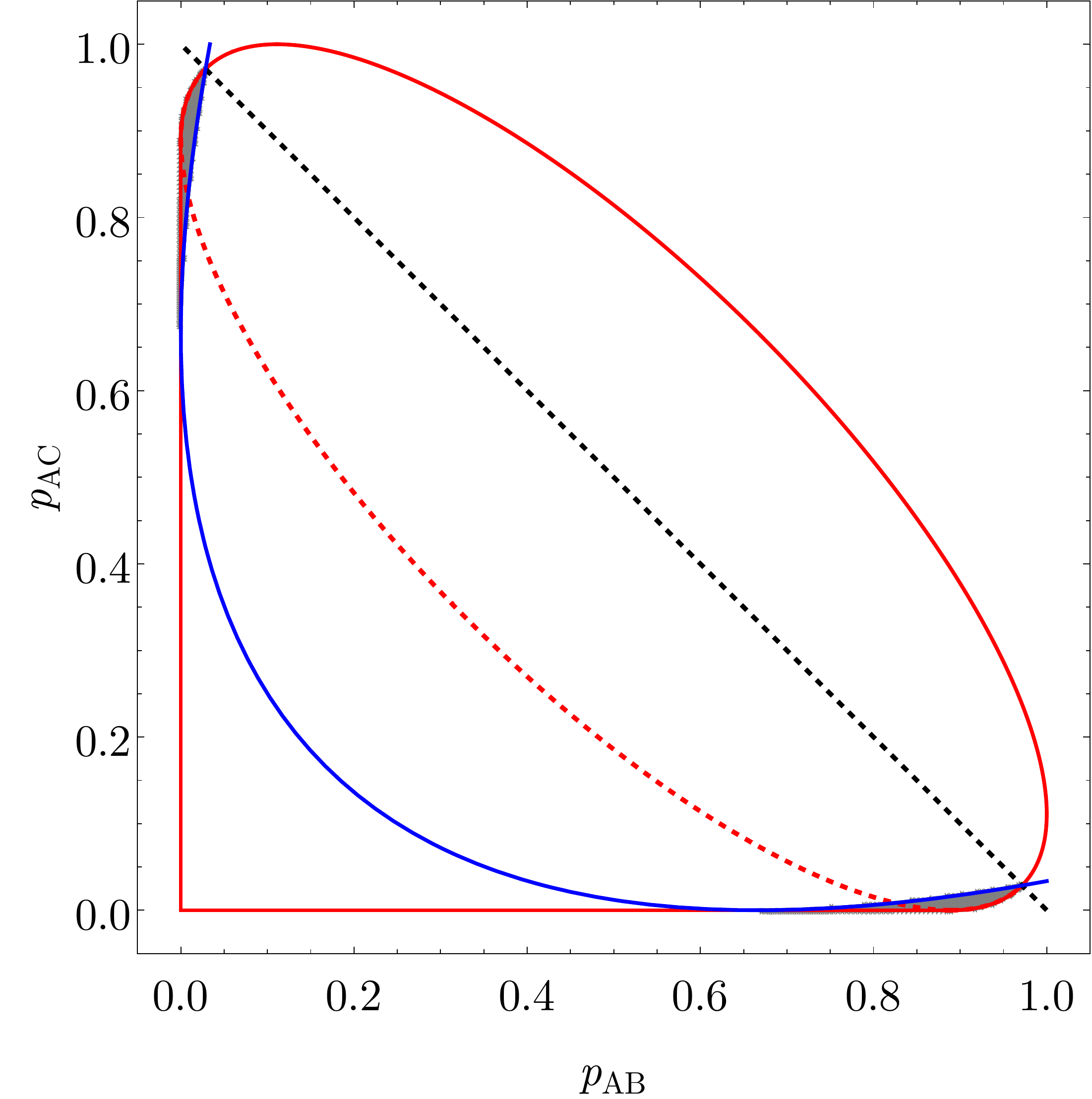}
    \caption{Parameter space for a pair of isotropic state marginals with, respectively, fully entangled fraction $p_{\rab}$ and $p_{\rac}$. 
    The compatible region for the pair is enclosed by the solid red line. The plot here shows the parabolas of \cref{eq.IsotropicMetatransitivityParabola}  and the ellipse of \cref{Eq.isotropic.ellipse}. 
    Fig.~3 is a simplified version of the current plot.} 
    \label{fig:MetatransitivityQuditIsotropic-detail}
\end{figure}

Ref.~\cite{johnsonviola2013} specifies the compatible region for a pair of isotropic states to be the region given by the convex hull of $(\phi^+_{\rab}, \phi^{+}_{\rac}) = (0,0)$ and the ellipse
\begin{equation}\label{Eq.isotropic.ellipse}
   \frac{(\tfrac{1}{d}\phi^{+}_{\rab} + \tfrac{1}{d}\phi^{+}_{\rac} - 1)^2}{\tfrac{1}{d^2}} + \frac{(\tfrac{1}{d}\phi^{+}_{\rab} - \tfrac{1}{d}\phi^{+}_{\rac})^{2}}{\tfrac{d^2-1}{d^2}} = 1,
\end{equation}
which in our parameters becomes the convex hull of
the point $(p_{\rab},p_{\rac}) = (0,0)$ and the ellipse
\begin{equation}
(p_{\rab}+p_{\rac}-1)^2 + \tfrac{1}{d^2-1}(p_{\rab}-p_{\rac})^2 = \tfrac{1}{d^2}.
\end{equation}
Finally, we verify that the parabola in Eq.~(\ref{eq.IsotropicMetatransitivityParabola}) divides the compatible region into four areas, and that the metatransitivity region obtained with this parabola matches the one that is obtained numerically for $d \le 5$ up to numerical precision.

\section{Other explicit examples}
\label{App.Examples}

For ease of reference, we summarize in \cref{Tbl.Summary.Eg} the nature of the various explicit examples presented in this Appendix.

\begin{table}[h!]
\begin{ruledtabular}
\begin{tabular}{c||c|cc|c|cc}
Example & $n$  & $d_{\rms_i}$ & $\sigma_{\rms_i}^\Gamma\succeq0$? & $d_{\rmt}$ &  $|\displaystyle 1-\min_{\rhos} \bra{\psi}\rhos\ket{\psi}|$ \\ \hline
\ref{App.3Qubit.SelfComp} & 3  & 2 & None & $2\times 2$ & 0\\
\ref{App.Genuine4} & 4  & 2 & None & $\{2\times 2\}^3$ & $\approx 10^{-11}$\\
\ref{App.k-ext} & 4 to 7  & 2 & All & $2\times 2$ & - \\
\ref{App.BE} & 3 & 3 & All & $3\times 3$ & - \\
\ref{App.GME} & 4  & 2 & All & $2\times 2\times 2$ & $\approx 10^{-9}$\\
\end{tabular}
\end{ruledtabular}
\centering \caption{\label{Tbl.Summary.Eg} Table summarizing various aspects of the explicit examples of (meta)transitivity presented in \cref{App.Examples}. From left to right, we list the subsection in which the example is presented, the local Hilbert space dimension $d_{\rms_i}$ of the input marginal $\sigma_{\rms_i}$, whether these input marginals are PPT, the dimensions $d_{\rmt}$ of the target system Hilbert space (e.g., $2\times2$ means $\rmt$ is a two-qubit system, whereas $\{2\times 2\}^3$ means three different two-qubit target systems have been considered), and if applicable, the minimum compatible fidelity of the joint state $\rhos$ with respect to a known compatible pure state $\ket{\psi}$.}
\end{table}



\subsection{Three-qubit transitivity from symmetric extensions}
\label{App.3Qubit.SelfComp}

Apart from the Werner state and the isotropic state marginals, here, we show that ETP can also be solved for a four-parameter family of two-qubit marginals. To this end, consider the two-qubit state
\begin{align}
\label{eq.selfcompChoi}
\sigma_{\rab} =
\begin{pmatrix}
a_0^2                   & 0                                 &  0                              & a_0 \frac{b}{\sqrt{2}} \\
0                       & a_1^2                             & e^{it}  a_1 \frac{b}{\sqrt{2}}  & 0 \\
0                       & e^{-it} a_1  \frac{b}{\sqrt{2}}   &  \frac{b^2}{2}                  & 0 \\
a_0 \frac{b}{\sqrt{2}}  &0                                  & 0                               & \frac{b^2}{2}
\end{pmatrix},
\end{align}
where $a_0,a_1,b \in [-1,1]$ and $t\in [0,2\pi]$.
It can be shown that Eq.~(\ref{eq.selfcompChoi}) is, up to normalization, the Choi representation of a single-qubit selfcomplementary quantum operation~\cite{Smaczynski2016}.

Computing the eigenvalues of the partial transpose of Eq.~(\ref{eq.selfcompChoi}), the smallest eigenvalue is given by
$\lambda_\mathrm{min} = \tfrac{1}{4} \left( 2a_i^2 + b^2 - \sqrt{4a_i^4 + 8a_j^2 b^2 - 4 a_i^2 b^2 + b^4} \right)$
for $i=0,j=1$ and vice-versa. Thus, Eq.~(\ref{eq.selfcompChoi}) is entangled when $|b|\in (0,1)$ and $|a_0| \ne |a_1|$ for $|a_0|,|a_1|\in [0, 1)$.
Next, we prove that entangled $\sigma_{\rab}$ has the pure, unique symmetric extension
\begin{align}
\label{eq.uniqSol}
\ket{\Psi}_{\rabc} = a_0 \ket{000} + a_{1}e^{it} \ket{011} + b\ket{1}\ket{\Psi^+},
\end{align}
where $\ket{\Psi^+}= \tfrac{1}{\sqrt{2}} (\ket{01} + \ket{10})$. It is easy to check that $\ket{\Psi}$ has the correct marginals, so it remains to show that it is unique.
For this, we show that the eigenstates of an arbitrary qubit tripartite state $\rho_{\rabc}$ must have a particularly structure in order to produce the correct marginal states $\rho_{\rab} = \sigma_{\rab} = \rho_{\rac}$. The proof may be of independent interest but so as to not detract attention from the discussion here, we postpone the details to Supplementary Note~\ref{app.uniqueSymExtSelfComp}.

Finally, because Eq.~(\ref{eq.uniqSol}) is the unique joint state, we obtain transitivity by solving for the case when its BC marginal is non-PPT. It is straightforward to verify that the characteristic polynomial of $\rho_{\rbc}^{\Gamma}$ can be factorized into
$\left( \tfrac{b^2}{2} \pm a_0 a_1 - x\right)$  and $\left[a_0^2 a_1^2 - \tfrac{b^2}{2} - \left(a_0^2 + a_1^2\right)x + x^2\right]$,
which yields a negative root when $|b| \ne \sqrt{2|a_0 a_1|}$.

\subsection{Genuine four-qubit transitivity with entangled marginals}
\label{App.Genuine4}

For completeness, we provide here a four-qubit state $\ket{\chi}_{\rm ABCD}$ with entangled marginals for AB, BC, and CD such that they exhibit the same kind of genuine four-party effect displayed by the example with all separable marginals (case (d) in Fig.~4) given by Eq.~(6) of the main text:
\begin{align}
\nonumber
\ket{\chi}_{\rm ABCD} &= \tfrac{1}{\sqrt{N}}\left(
    -\tfrac{1}{5},-\tfrac{1}{12},-\tfrac{1}{93},-\tfrac{2}{9},\tfrac{3}{10},\tfrac{1}{6},-\tfrac{1}{4},-\tfrac{2}{3}, \right. \\
    &\qquad
    \left. \tfrac{1}{11},-\tfrac{3}{11},\tfrac{1}{7},-\tfrac{1}{6},-\tfrac{1}{6},-\tfrac{1}{4},\tfrac{2}{9},\tfrac{1}{7} \right)\Tp,
\end{align}
where $N$ is a normalization constant. Imposing the AB, BC, and CD marginals of $\ket{\chi}_{\rm ABCD}$ in Eq.~(\ref{eq.transitivitySDP}) leads, respectively, to
$\lambda_{\rm AD}^\star \approx -0.0788, \lambda_{\rac}^\star \approx -0.1344$, and 
$\lambda_{\rm BD}^\star \approx -0.0553$. These can also be verified by noting that $\ket{\chi}_{\rm ABCD}$ appears to be the unique state compatible with these marginals.

\subsection{\texorpdfstring{$k$}{k}-qubit metatransitivity with separable marginals for \texorpdfstring{$k$}{k} from 4 to 7}
\label{App.k-ext}

Next, we present some examples that may be extended to a more complicated setting. We begin with a four-qubit metatransitivity example where the separable marginals AB, BC, and CD can be used to infer the entanglement in AD. Let $\mathcal{B}(\vec{w})$ be a  Bell-diagonal two-qubit state where $\vec{w}$ is the vector of convex weights of the Bell states $\{\ket{\Phi^\pm} = \tfrac{1}{\sqrt{2}}(\ket{00}\pm\ket{11}),\ket{\Psi^\pm}=\tfrac{1}{\sqrt{2}}(\ket{01}\pm\ket{10})\}$, in that order. Take the marginal states $ \mathcal{B}(\vec{w}_{\rab}), \mathcal{B}(\vec{w}_{\rbc})$, and $\mathcal{B}(\vec{w}_{\rm CD})$, where

\begin{align}
\label{eq.SepABSepBCSepCDEntADBellDiag}
\nonumber
\vec{w}_{\rab} &= \tfrac{1}{10^4}\left( 1363, 4552, 610, 3475 \right), \\
\nonumber
\vec{w}_{\rbc} &= \tfrac{1}{10^4}\left( 1819, 4153, 3957, 71 \right), \\
\vec{w}_{\rm CD} &= \tfrac{1}{10^4}\left( 4440, 3209, 2028, 323 \right).
\end{align}
These marginals are separable because a Bell-diagonal state is separable {\em iff} all $w_i$ are less than $\tfrac{1}{2}$~\cite{Horodecki.PRA.1996}. Using Eq.~(\ref{eq.transitivitySDP}), we obtain 
$\lambda_{\rm AD}^{\star} \approx -0.0020$.
Since the optimal joint state has separable Bell-diagonal states in AC and BD, the metatransitivity of entanglement is not possible in those marginals.

Remarkably, the same Bell-diagonal states can be used to exhibit 5-qubit metatransitivity by taking the marginals $\mathcal{B}(\vec{z})$:
\begin{align}
    \vec{z}_{\rab} &= \vec{w}_{\rab}, &
    \vec{z}_{\rbc} &= \vec{z}_{\rm CD} = \vec{w}_{\rbc}, &
    \vec{z}_{\rm DE} &= \vec{w}_{\rm CD}. 
\end{align}
Indeed, we obtain $\lambda^\star_{\rm AE} \approx -0.1165$, thus exhibiting metatransitivity between the ends of the chain from A to E in~\cref{fig.5to6and7qubitsMetatransitivity}.

We next present an example of five-qubit metatransitivity that may be extended in a different manner.  The four input Bell-diagonal marginals $\mathcal{B}(\vec{q})$ are
\begin{align}
\label{eq.SepABSepBCSepCDSepDEEntAEBellDiag}
\nonumber
\vec{q}_{\rab} &= \tfrac{1}{10^4}\left( 566, 4203, 3933, 1298 \right), \\
\nonumber
\vec{q}_{\rbc} &= \tfrac{1}{10^4}\left( 3252, 4614, 2068, 66 \right), \\
\nonumber
\vec{q}_{\rm CD} &= \tfrac{1}{10^4}\left( 4324, 3437, 323, 1916 \right), \\
\vec{q}_{\rm DE} &= \tfrac{1}{10^4}\left( 818, 4430, 503, 4249 \right).
\end{align}
From Eq.~(\ref{eq.transitivitySDP}) we obtain $\lambda^\star_{\rm AE} \approx -0.0379$. 
Interestingly, we can use these Bell-diagonal states to get metatransitivity examples for six and seven qubits from a tree graph (see \cref{fig.5to6and7qubitsMetatransitivity}) of separable marginals. For the six-qubit example, we keep the marginals of \cref{eq.SepABSepBCSepCDSepDEEntAEBellDiag} and add another node F with $\vec{q}_{\rm BF} = \vec{q}_{\rbc}$, which again gives $\lambda^\star_{\rm AE} \approx -0.0379$. In the seven-qubit case, we keep all these marginals and add a node G with $\vec{q}_{\rm BG} = \vec{q}_{\rbc}$, this time around giving  $\lambda^\star_{\rm AE} \approx -0.0402$. We also note that \cref{eq.SepABSepBCSepCDSepDEEntAEBellDiag}  does not show metatransitivity in the other bipartite marginals, as can be seen from the separable marginals in the optimal global state for the metatransitivity in ${\rm AE}$.

\begin{figure}[t]
    \centering
\begin{tikzpicture}
\begin{scope}[every node/.style={circle,fill,inner sep=0pt, minimum size = 1pt, scale = 0.7}]
    \node (A) at (0,0.3) {A};
    \node (B) at (0.75,0.3) {B};
    \node (C) at (1.5,0.3) {C};
    \node (D) at (2.25,0.3) {D};
    \node (E) at (3.0,0.3) {E};
    \node (F) at (0.2, -0.45) {F};
     \node (G) at (1.3, -0.45) {G};
\end{scope}
\begin{scope}
    \node at (0,0.7)  {$\ra$};
    \node at (0.75,0.7)  {$\rb$};
    \node at (1.5,0.7)  {$\rc$};
    \node at (2.25,0.7)  {$\rd$};
    \node at (3.0,0.7)  {$\rm E$};
    \node at (-0.2,-0.45)  {$\rm F$};
     \node at (1.7,-0.45)  {$\rm G$};
\end{scope}
\begin{scope}[every node/.style={fill=white,circle},
              every edge/.style={draw=black,very thick}]
    \path [-] (A) edge (B);
    \path [-] (B) edge (C);
    \path [-] (C) edge (D);
    \path [-] (D) edge (E);
    \path [-] (B) edge (F);
    \path [-] (B) edge (G);
\end{scope}
\end{tikzpicture}
    \caption{A tree graph showing how the metatransitivity example of \cref{eq.SepABSepBCSepCDSepDEEntAEBellDiag} for qubits A, B, C, D, and E may be extended to six and seven qubits. The entanglement in AE can be certified by specifying  the chain of neighboring two-qubit marginals from A to E alone, or together with BF and/or BG.}
    \label{fig.5to6and7qubitsMetatransitivity}
\end{figure}
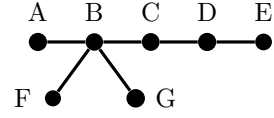

\subsection{Three-qutrit transitivity from bound entangled states}
\label{App.BE}

Here we provide two examples of transitivity involving marginal states that are PPT bound entangled~\cite{Horodecki:PRL:1998}. For this, we consider the bound entangled state obtained from the unextendible product basis (UPB) known as $\mathrm{Tiles}$~\cite{BennettDiVincenzo1999}:
\begin{gather}
\nonumber
    \ket{T_0} = \tfrac{1}{\sqrt{2}}\ket{0}(\ket{0}-\ket{1}), \qquad
    \ket{T_2} = \tfrac{1}{\sqrt{2}}\ket{2}(\ket{1}-\ket{2}), \\
    \nonumber
    \ket{T_1} = \tfrac{1}{\sqrt{2}}(\ket{0}-\ket{1})\ket{2}, \qquad
    \ket{T_3} = \tfrac{1}{\sqrt{2}}(\ket{1}-\ket{2})\ket{0}, \\
    \ket{T_4} = \tfrac{1}{\sqrt{2}}(\ket{0} + \ket{1} + \ket{2})(\ket{0} + \ket{1} + \ket{2}).
\end{gather}
The bound entangled state $\rho_\mathrm{Tiles}$ is obtained by taking the normalized projector onto the subspace complementary to the $\mathrm{Tiles}$ UPB:
$\rho_{\mathrm{Tiles}} = \tfrac{1}{4}\left(\id - \sum_{i=0}^{4} \proj{T_i}\right).$
Now if we employ \cref{eq.transitivitySDP} with marginals $\sigma_{\rab} = \sigma_{\rbc} = \rho_{\mathrm{Tiles}} $, we find the optimal value $\lambda_{\rac}^\star \approx -0.1194$, thus certifying the transitivity in $\rac$ given marginal states in $\rab$ and $\rbc$ that are bound entangled.

For the second example, we consider the UPB known as $\mathrm{Pyramid}$, which is given by
\begin{equation}
    \ket{P_j} = \ket{p_j}\otimes\ket{p_{2j\text{ mod }5}}, \qquad j=0,\ldots, 4,
\end{equation}
where $\ket{p_j}$ are states that form the base of a regular pentagonal pyramid in  $\mathbb{R}^3$:
\begin{equation}
    \ket{p_{j}} = \tfrac{2}{\sqrt{5 + \sqrt{5}}} \left( \cos\tfrac{2\pi j}{5}, \sin\tfrac{2\pi j}{5}, \tfrac{\sqrt{1+\sqrt{5}}}{2} \right)\Tp,
    \,\, j = 0,\ldots,4.
\end{equation}
The corresponding bound entangled state is
$\rho_{\mathrm{Pyramid}} = \tfrac{1}{4}\left(\id - \sum_{j=0}^{4} \proj{P_j}\right).$
Solving \cref{eq.transitivitySDP} with marginals $\sigma_{\rab} = \sigma_{\rbc} = \rho_{\mathrm{Pyramid}}$, we obtain $\lambda_{\rac}^\star \approx -0.1094$.

Interestingly, we observe a similar type of transitivity with the marginals $\sigma_{\rab} = \sigma_{\rbc}$ set to any of the $10^{5}$ randomly generated bound entangled states from the six-parameter family of all two-qutrit UPBs~\cite{DiVincenzoMor2003}.

\subsection{Four-qubit transitivity for genuine tripartite entanglement}
\label{App.GME}

Here, we give an example where a collection of two-qubit marginals imply the presence of genuine tripartite entanglement in a three-qubit marginal.
To this end, consider the AB, AC, and AD marginals arising from the four-qubit state
 \begin{align}
 \nonumber
  \ket{\Psi}_{\rm ABCD} &=
  \bigg( 
     \tfrac{1}{12},\tfrac{1}{9},0,\tfrac{1}{6},\tfrac{1}{9},\tfrac{1}{9},0,0,0,\tfrac{\sqrt{42}}{9}, \\
     &\qquad
     -\tfrac{1}{3},-\tfrac{1}{12},-\tfrac{1}{4},-\tfrac{1}{12},-\tfrac{1}{3},\tfrac{1}{3} \bigg)\Tp.
 \end{align}
 For these marginals, the smallest compatible values of $M$ and $N$ defined in \cref{Eq.defMandN}, respectively,
are
 \begin{align}
     M(\rho_{\rm BCD}) &\approx 1.8606 \text{ and } N(\rho_{\rm BCD}) \approx  1.8008. 
 \end{align}
In this case, however, the biseparable upper bound for the criterion of~\cite{Li2017}, see \cref{App.GME.Criterion}, is  $\beta = \tfrac{1+2d}{3}=\tfrac{5}{3}\approx1.667$, which is clearly violated. Thus, the BCD marginal given the aforementioned marginals of AB, AC, and AD must be genuinely tripartite entangled. Note that the global state compatible with these marginals again appear to be unique, see \cref{Tbl.Summary.Eg}.

\section{Extending metatransitivity examples to more parties}\label{app.MetaMoreParties}

Here we show how to extend an example of metatransitivity for $n$-parties to one involving $n+k$ parties, for arbitrary $k$.
Suppose we have an $n$-partite system $\rms$ with marginal states $\mathcal{S} = \{ \sigma_{\rms_i} \}$ and let $\rmt$ be some target marginal system in $\rms$ such that for some entanglement witness $\W$ we have that $\rhot = \tr_{\rms \backslash \rmt}(\rhos)$ and
$\W(\rhot) < 0$ for all joint states $\rhos$ compatible with $\mathcal{S}$.
Let $\rhos^\star$ denote the joint state with
\begin{equation}
    \lambda := \max_{\rhos} \W[\tr_{\rms \backslash \rmt}(\rhos)] = 
    \W[\tr_{\rms \backslash \rmt}(\rhos^\star)]
\end{equation}
from \cref{eq.transitivitySDP}. We assume metatransitivity in $\rmt$, so $\lambda < 0$.

Let ${\rm R}$ be the $(n+k)$-partite system such that ${\rm R} \backslash \rms = {\rm K}$, that is, ${\rm K}$ is the $k$-partite marginal system of ${\rm R}$ that is disjoint from the $n$-partite $\rms$. 
Let $\mathcal{R} = \mathcal{S} \cup \mathcal{K}$ where
$\mathcal{K} = \{ \tau_{{\rm R}_i} \}$ and ${\rm R}_i$ are marginal systems of ${\rm R}$ that are distinct (but not necessarily disjoint) from the marginal systems involved in $\mathcal{S}$.
To avoid trivial situations, we assume the marginals specified in $\mathcal{K}$ are compatible with those already given in $\mathcal{S}$.

Consider the following metatransitivity problems for ${\rm R}$:
\begin{gather}
\nonumber
\mu_1 :=
\max_{\rho_{\rm R}} \W\left[\tr_{{\rm R} \backslash \rmt}(\rho_{{\rm R}})\right], 
\text{ s.t. }  \tr_{{\rm R}\backslash \rms_{i}}(\rho_{{\rm R}}) = \sigma_{\rms_{i}} \forall\,\, \rms_{i}\in\calS, \\
\tr_{{\rm R} \backslash {\rm R}_i}(\rho_{{\rm R}}) = \tau_{{\rm R}_i} \forall\,\, {\rm R}_{i}\in \mathcal{K}, \,\, \rho_{{\rm R}} \succeq 0,
\end{gather}
and
\begin{gather}
\nonumber
\mu_2 :=
\max_{\rho_{\rm R}} \W\left[\tr_{{\rm R} \backslash \rmt}(\rho_{{\rm R}})\right], 
\text{ s.t. }  \tr_{{\rm R}\backslash \rms_{i}}(\rho_{{\rm R}}) = \sigma_{\rms_{i}} \forall\,\, \rms_{i}\in\calS, \\
\rho_{{\rm R}} \succeq 0.
\end{gather}
We have that $\mu_1 \le \mu_2$ since the former optimization has more constraints. However, note that $\rmt$, $\rms_i$ are subsystems in $\rms$, and
\begin{equation}
    \tr_{{\rm R} \backslash \rms_i} = \tr_{\rms \backslash \rms_i}\circ(\tr_{{\rm R} \backslash \rms}),\quad
    \tr_{{\rm R} \backslash \rmt} = \tr_{\rms \backslash \rmt}\circ(\tr_{{\rm R} \backslash \rms}).
\end{equation}
Hence, we can rewrite the latter problem as
\begin{gather}
\max_{\rho_{\rm R}} \W\left[\tr_{\rms \backslash \rmt}(\rhos)\right], 
\text{ s.t. }  \tr_{\rms \backslash \rms_{i}}(\rhos) = \sigma_{\rms_{i}} \forall\,\, \rms_{i}\in\calS, \,\,
\rho_{{\rm R}} \succeq 0.
\end{gather}
But now we see that the objective function and marginal constraints depend only on the subsystem $\rms$ of ${\rm R}$ and because partial trace is a positivity-preserving map, we can replace the last constraint with $\rhos \succeq 0$ and the optimization over $\rho_{{\rm R}}$ with the optimization over $\rhos$. Thus, we have that
\begin{equation}
    \mu_1 \le \mu_2 = \lambda < 0.
\end{equation}
This means we can extend any metatransitivity example to more parties as long as the additional constraints have a compatible global state.

\section{Local compatibility implies joint compatibility for classical-quantum marginals}\label{app.LocalCompJoint}

Here we show that in the tripartite case, for two classical-quantum states that overlap in a classical subsystem (i.e., its density matrix is diagonal in the computational basis), then compatibility in the overlapping subsystem leads to joint compatibility. We show this by constructing one of the possible global states.

Let $\{\ket{x}:x = 1,\ldots, d\}$ be an orthonormal basis for a $d$-dimensional Hilbert space. 
Consider the following bipartite states with local dimension $d$:
\begin{align}
   \sigma_{\rab} = \sum_{i=1}^{q} \sigma^{i}_{\ra} \otimes \sum_{x=1}^{d} \beta^{i}_{x} \proj{x}, \\
   \tau_{\rbc} = \sum_{j=1}^{r} \sum_{x=1}^{d} \widetilde{\beta}^{j}_{x} 
    \proj{x} \otimes \tau^{j}_{\rc}.
\end{align} 
for some $\sigma^{i}_{\ra}, \tau^{j}_{\rc} \geq 0$  and $\tr(\sigma^{i}_{\ra}) = \tr(\tau^{j}_{\rc})  =1$.
This requires $\beta^{i}_{x} \geq 0$ and $\widetilde{\beta}^{j}_{x} \geq 0$.
If $\sigma_{\rab}$ and $\tau_{\rbc}$ are compatible in ${\rm B}$ then we have that
\begin{equation}
\rho_{\rb} = \sum_{x=1}^{d} \rho_{\rb,x} \proj{x}, \quad
    \rho_{\rb,x} = \sum_{i=1}^{q} \beta^{i}_{x} = \sum_{j=1}^{r} \widetilde{\beta}^{j}_{x}, \forall x.
\end{equation}
Now we can introduce $\beta^{ij}_{x}$ such
that 
$\sum_{j} \beta^{ij}_{x} = \beta^{i}_{x}$ and $\sum_{i} \beta^{ij}_{x} = \widetilde{\beta}^{j}_{x}$. 
Then we can choose
\begin{equation}
    \beta^{ij}_{x} =
    \begin{cases}
        \dfrac{\beta^{i}_{x}\widetilde{\beta}^{j}_{x}}{\rho_{\rb,x}}, & \mbox{if } \rho_{\rb,x} \ne 0.\\
        0, & \mbox{otherwise.} 
    \end{cases}
\end{equation}

Then we can construct the tripartite state
\begin{align}
    \nonumber
    \rho_{\rabc} &= \sum_{ij} \sigma^{i}_{\ra} \otimes \sum_{x} \beta^{ij}_{x} \proj{x} \otimes \tau^{j}_{\rc}, \\
    &= \sum_{i,j,x} \beta^{ij}_{x} \left( \sigma^{i}_{\ra} \otimes \proj{x} \otimes \tau^{j}_{\rc}\right).
\end{align}
which is a valid density operator since this is a convex mixture of unit-trace, positive semidefinite operators.

The result can be easily extended to the multipartite case for marginal states that form a tree graph and where all overlapping subsystems are classical.

\section{Uniqueness of symmetric extensions for the Choi states of self-complementary qubit operations}
\label{app.uniqueSymExtSelfComp}

Here we will prove that the Choi state of a qubit self-complementary operation in \cref{eq.selfcompChoi} has a unique and pure symmetric extension. Our approach will be to consider an arbitrary
qubit tripartite state $\rho_{\rabc} = \sum_{i} c_{i} \proj{\Psi_i}$
and determine the form the eigenstates $\ket{\Psi_i}$ must take to produce the correct marginals in $\rab$ and $\rac$.

It is useful to observe that Eq.~(\ref{eq.selfcompChoi}) can be written as $\sigma_{\rab} = \proj{\varphi_0} + \proj{\varphi_1}$ where
\begin{align}
\ket{\varphi_0} &= a_0 \ket{00} + \tfrac{b}{\sqrt{2}}\ket{11}, &
\ket{\varphi_1} &= a_1 e^{it} \ket{01} + \tfrac{b}{\sqrt{2}}\ket{10},
\end{align}
are the unnormalized eigenvectors. This means we may examine separately the contributions to $\sigma_{\rab}$ in the orthogonal subspaces $\mathcal{V}_0 = \mathrm{span}\{\ket{00},\ket{11}\}$ 
and $\mathcal{V}_1 = \mathrm{span}\{\ket{01},\ket{10}\}$.

First, let us consider the contribution from $\mathcal{V}_0$ to the $\rac$ marginal state. It has the general form
\begin{equation}
    \alpha_{0} \ket{000} + \beta_{0} \ket{101} + \alpha_{1}\ket{010} + \beta_{1}\ket{111},
\end{equation}
where to match $\ket{\varphi_0}$ after we trace out system $\rb$, we require
\begin{align}
    a_0^2 &= \alpha_0^2+ \alpha_1^2 = \lVert \vec{\alpha} \rVert^2, &
    \tfrac{b^2}{2} &= \beta_0^2+ \beta_1^2 = \lVert \vec{\beta} \rVert^2,
\end{align}
where $\vec{\alpha} = (\alpha_0,\alpha_1)$ and $\vec{\beta} = (\beta_0, \beta_1)$. We see that
\begin{equation}
    \tfrac{a_{0}b}{\sqrt{2}} = \alpha_0^*\beta_0 + \alpha_1^*\beta_1 = \innerprod{\vec{\alpha}}{\vec{\beta}}.
\end{equation}
This means that $\vec{\alpha}$ and $\vec{\beta}$ saturate the Cauchy-Schwarz inequality
\begin{equation}
   \left\lvert \innerprod{\vec{\alpha}}{\vec{\beta}}\right\rvert^2 =  \lVert \vec{\alpha}\rVert^2 \lVert \vec{\beta}\rVert^2, 
\end{equation}
which implies that $\vec{\alpha}$ and $\vec{\beta}$ must be linearly dependent, i.e.,
$\gamma = \tfrac{\alpha_1}{\alpha_0} = \tfrac{\beta_1}{\beta_0}.$
This suggests that the contribution should have the form
$\alpha_{0}(\ket{000} + \gamma\ket{010}) + \beta_{0} (\ket{101} + \gamma\ket{111}).$
However, since the $\rab$ and $\rac$ must be the same state, we need to add terms to make it symmetric with respect to $\rb$ and $\rc$:
$\alpha_{0}(\ket{000} + \gamma\ket{010} + \underline{\gamma \ket{001}} ) + \beta_{0} (\ket{101} + \underline{\ket{110}} + \gamma\ket{111}).$
Finally, we notice this is a superposition of terms with and without the factor $\gamma$ that can be independent, so the contributions from $\mathcal{V}_0$ have the form
\begin{align}
\label{eq.V0contrib}
    \ket{\Psi_0} \in \{& \alpha_0\ket{000} + \beta_0(\ket{101} + \ket{110}), \nonumber \\
    &\, \alpha_0(\ket{010} + \ket{001}) + \beta_0 \ket{111} \}.
\end{align}

Next we consider the contribution from $\mathcal{V}_1$. But we observe that $\ket{\varphi_1}$ essentially has the same form as $\ket{\varphi_0}$ so we can make the same argument just by substituting  
\begin{equation}
    (\ket{00},\ket{11}, a_{0}, \alpha_0,\beta_0, \gamma) \mapsto (\ket{01}, \ket{10}, a_{1}e^{it},\alpha_1,\beta_1, \delta).
\end{equation}
Thus, the contribution from $\mathcal{V}_1$ can be immediately written as
\begin{align}
\label{eq.V1contrib}
    \ket{\Psi_1} \in \{& \alpha_1\ket{011} + \beta_1(\ket{101} + \ket{110}), \nonumber \\
    &\, \alpha_1(\ket{010} + \ket{001}) + \beta_0 \ket{100} \}.
\end{align}
Now we will combine the contributions from $\mathcal{V}_{0}$ and $\mathcal{V}_{1}$. Observe that the respective first states in \cref{eq.V0contrib} and \cref{eq.V1contrib} have a common term $(\ket{110} + \ket{101})$. This term needs $\ket{000}$  and $\ket{011}$ to produce the correct marginals in $\mathcal{V}_0$ and
$\mathcal{V}_1$, respectively. This suggests that they should appear together and with $\beta_0 = \beta_1$ we have the candidate eigenstate
\begin{equation}
    \ket{\Psi} = \alpha_0 \ket{000} + \alpha_1\ket{011} + \beta_0(\ket{110} + \ket{101})
\end{equation}
similarly, by taking the respective second states in \cref{eq.V0contrib} and \cref{eq.V1contrib}, we see that they share the term $(\ket{010} + \ket{001})$.  This term needs $\ket{111}$ and $\ket{100}$ to produce the correct marginals in $\mathcal{V}_0$ and
$\mathcal{V}_1$, respectively. This gives the other candidate eigenstate 
\begin{equation}
    \ket{\overline{\Psi}} = \alpha_0' (\ket{010} + \ket{001}) + \beta_0' \ket{111} + \beta_1' \ket{100}).
\end{equation}
At this point, the global state is in $\text{span}\{\ket{\Psi},\ket{\overline{\Psi}}\}$, so the only two eigenstates can be written as  $\ket{\Psi_1} =\mu \ket{\Psi}+\nu \ket{\overline{\Psi}}$ and $\ket{\Psi_2} =\nu \ket{\Psi}-\mu \ket{\overline{\Psi}}$, where $|\mu|^2+|\nu|^2=1$. We first consider when the global state is rank-2: $\rho_{\rabc}=c\ketbra{\Psi_1}{\Psi_1}+(1-c)\ketbra{\Psi_2}{\Psi_2},c\in(0,1)$. To satisfy $\tr (\sigma_{\rab}\ketbra{10}{10})=\tr(\sigma_{\rab}\ketbra{11}{11})$, we have $\beta_0'=\beta_1'$. Here we define
\begin{equation}
    \Tilde{\rho}_{\rab} =\tr_\rc(\rho_{\rabc})     =\begin{pmatrix}
\Tilde{\rho}_{11} & \Tilde{\rho}_{12}  &  \Tilde{\rho}_{13}  & \Tilde{\rho}_{14}  \\
\Tilde{\rho}_{21} & \Tilde{\rho}_{22}  &  \Tilde{\rho}_{23}  & \Tilde{\rho}_{24} \\
\Tilde{\rho}_{31} & \Tilde{\rho}_{32}  &  \Tilde{\rho}_{33}  & \Tilde{\rho}_{34} \\
\Tilde{\rho}_{41} & \Tilde{\rho}_{42}  &  \Tilde{\rho}_{43}  & \Tilde{\rho}_{44} 
\end{pmatrix},
\end{equation}
Looking at the subspace spanned by $\ket{01}$ and $\ket{10}$, we have
\begin{align}
\label{Eq:01_10_subspace}
        \Tilde{\rho}_{22} &= |\vec{v_1}|^2, &
        \Tilde{\rho}_{23} &= \vec{v_1}\cdot \vec{v_2}^*, \nonumber \\
        \Tilde{\rho}_{32} &= \vec{v_1}^*\cdot \vec{v_2},&
        \Tilde{\rho}_{33} &= |\vec{v_2}|^2,
\end{align}
where 
\begin{align}
    v_1 &=\left(\alpha_0'\sqrt{(1-c)|\mu|^2+c|\nu|^2},\alpha_1\sqrt{c|\mu|^2+(1-c)|\nu|^2} \right)\Tp, \nonumber \\
    v_2 &=\left(\beta_0'\sqrt{(1-c)|\mu|^2+c|\nu|^2},\beta_0\sqrt{c|\mu|^2+(1-c)|\nu|^2}\right)\Tp.
\end{align}
However, comparing this with the corresponding sub-matrix in $\sigma_{\rab}$, the two vectors should saturate the Cauchy–Schwarz inequality, which implies $\vec{v_1}=\eta\vec{v_2}$ for some constant $\eta$. Thus, we have $\frac{\alpha_0'}{\beta_0'}=\frac{\alpha_1}{\beta_0}$. Similarly for the subspace spanned by $\ket{00}$ and $\ket{11}$, we have
\begin{align}
\label{Eq:00_11_subspace}
        \Tilde{\rho}_{11} &= |\vec{v_3}|^2,&
        \Tilde{\rho}_{14} &= \vec{v_3}\cdot \vec{v_4}^*, \nonumber \\
        \Tilde{\rho}_{41} &= \vec{v_3}^*\cdot \vec{v_4}, &
        \Tilde{\rho}_{44} &= |\vec{v_4}|^2,
\end{align}
where 
\begin{align}
  v_3 &= \left(\alpha_0'\sqrt{(1-c)|\mu|^2+c|\nu|^2},\alpha_0\sqrt{c|\mu|^2+(1-c)|\nu|^2}\right)\Tp, \nonumber \\ 
  v_4 &= \left(\beta_0'\sqrt{(1-c)|\mu|^2+c|\nu|^2},\beta_0\sqrt{c|\mu|^2+(1-c)|\nu|^2} \right)\Tp,
\end{align}
so $\frac{\alpha_0'}{\beta_0'}=\frac{\alpha_0}{\beta_0}$. However, this means $\alpha_0=\alpha_1$, which will imply that $\Tilde{\rho}_{11}=\Tilde{\rho}_{22}$. However, this means that $|a_0| = |a_1|$  and this leads to a separable $\sigma_{\rab}$. Therefore, for entangled $\sigma_{\rab}$, the global state cannot be rank-2. For all the possible rank-1 global states $\rho_{\rabc}=\ketbra{\Psi_1}{\Psi_1}$, we can use the same argument above (setting $c = 1$) to exclude the situation when $\nu \neq 0$. As a result, $\rho_{\rabc}=\ketbra{\Psi}{\Psi}$ is the unique global state.



\end{document}